\documentclass[11pt,a4paper]{article}

\usepackage[margin=1in]{geometry}

\usepackage{amsmath,amssymb,amsthm,mathtools}
\usepackage{physics}
\usepackage{bm}
\usepackage{siunitx}

\usepackage{graphicx}
\usepackage{subcaption}
\usepackage{booktabs}
\usepackage{xcolor}
\usepackage[colorlinks=true,linkcolor=blue,citecolor=blue,urlcolor=blue]{hyperref}
\usepackage[nameinlink,capitalize]{cleveref}
\usepackage{floatrow}   
\graphicspath{{./images/}}

\usepackage[numbers,sort&compress]{natbib}

\newtheorem{proposition}{Proposition}
\theoremstyle{definition}
\newtheorem{definition}{Definition}

\DeclareMathOperator{\ad}{ad}
\DeclareMathOperator{\Hom}{Hom}
\newcommand{\Pexp}[0]{\mathcal{P}\exp}

\title{\textbf{Tropical WKB asymptotics of NRS coordinates for opers in $SU(2)$, $N_f=4$ theory}}
\author{
  Vasilii Iugov\\
  \small YITP, Stony Brook University, Stony Brook, US\\
  vasilii.iugov@stonybrook.edu
}
\date{}

\begin{document}

\maketitle

\begin{abstract}
We study the semiclassical limit of $SL_2$-opers on the four-punctured sphere in Nekrasov--Rosly--Shatashvili Darboux coordinates. Using exact Wentzel--Kramers--Brillouin (WKB) connection formulae, we express the trace coordinates of the corresponding $SL_2(\mathbb C)$ character variety as finite Laurent sums of Voros exponentials. Tropicalizing these formulae and the NRS relations gives a chamberwise integer affine linear system for the leading logarithms of the NRS coordinates. In chambers where this system is unimodular and the selected cycles form a primitive symplectic pair, the leading asymptotics agree, up to flavor-period shifts, with Seiberg--Witten periods of the $\mathcal N=2$ $SU(2)$ theory with $N_f=4$ fundamental hypermultiplets. We verify this mechanism in a sample chamber and in the weak-coupling degeneration. No global coordinate-independent recovery theorem is claimed; non-unimodular or degenerate chambers are treated as limitations of the chosen NRS chart. In the weak-coupling degeneration, we show that the NRS chart can be chosen compatibly with the plumbing limit so that the resulting chamber is unimodular and non-degenerate away from tropical walls.
\end{abstract}

\section{Introduction}
\label{sec:introduction}

Seiberg-Witten geometry appears in several closely related but technically
different forms. On one side, it is described by periods of a meromorphic
differential on a family of curves over the Coulomb branch. On another, for class \(S\) theories,
the same geometry is encoded in moduli spaces of flat \(SL_2(\mathbb C)\)
connections, where opers form a distinguished Lagrangian subvariety. Relating
these descriptions requires understanding how the monodromy data of an oper
develops WKB asymptotics controlled by the Seiberg--Witten differential in the semiclassical limit. This is the problem addressed in this
paper, with particular calculations done for $A_1$ theory on a 4-punctured sphere.

The physical motivation for this question comes from comparing two different
deformations of the same four-dimensional \(\mathcal N=2\) supersymmetric gauge
theory. One deformation is obtained by compactifying the theory on
\(\mathbb R^3\times S^1_R\). Following Seiberg and Witten, Gaiotto, Moore and
Neitzke studied the resulting three-dimensional low-energy theory, whose
target space is the Hitchin moduli space equipped with an \(R\)-dependent
hyperkähler metric \cite{gaiotto_four-dimensional_2010,gaiotto_wall-crossing_2011}.
The metric is described by a twistor family of holomorphic symplectic
structures and Darboux coordinates. In the large-\(R\) limit, it approaches the
semiflat metric determined by the Seiberg--Witten period map, while the
finite-\(R\) corrections are controlled by BPS states and wall-crossing.

Another deformation is the \(\Omega\)-deformation of the four-dimensional
theory. Nekrasov introduced the \(\Omega\)-deformation on \(\mathbb R^4\)
\cite{nekrasov_seiberg-witten_2002}, and Nekrasov and Shatashvili later studied
the limit in which only one equivariant parameter remains,
\[
    \epsilon_1=\hbar,\qquad \epsilon_2=0 .
\]
In this limit the theory is effectively two-dimensional and has an effective
twisted superpotential. The expectation in Nekrasov's approach is that, as
\(\hbar\to0\), this twisted superpotential approaches the Seiberg--Witten
prepotential in the form
\[
    \widetilde W_{\mathrm{eff}} \sim \frac{1}{\hbar}\mathcal F,
\]
up to convention-dependent linear and logarithmic terms. Thus the flat-space
limit of the \(\Omega\)-deformed theory should reproduce the same
Seiberg--Witten geometry that appears in the large-\(R\) limit of the
GMN construction.

The geometry of this identification is not completely tautological, because
the two constructions describe different aspects of the same hyperkähler
manifold. GMN use WKB triangulations and Fock--Goncharov-type coordinates to
construct Darboux coordinates in the twistor family. In the
\(\Omega\)-deformed setup, the relevant object is instead the brane of opers
inside the same moduli space of flat connections. The oper brane has a natural
description in Nekrasov--Rosly--Shatashvili coordinates, which are adapted to a
pants decomposition and are expressed in terms of monodromy traces
\cite{nekrasov_darboux_2013}. There is no equally direct description of this
oper Lagrangian in the Fock--Goncharov coordinates used by GMN. Thus the two
approaches explore the same hyperkähler manifold, but in different coordinate
systems and in different physical limits.

The purpose of this paper is to check the consistency of these two flat-space
limits. We study the brane of opers in NRS coordinates and take the
semiclassical limit \(\hbar\to0\). The tropical limit is chamberwise, but not every chamber has the same interpretation. We therefore distinguish integral-period chambers, in which the dominant WKB asymptotics of the NRS Darboux coordinates form an integral symplectic pair of Seiberg--Witten periods, from non-integral-period chambers, in which the same tropical procedure gives a well-defined leading asymptotic but not an integral period basis in the chosen NRS chart.

\paragraph{Seiberg--Witten geometry.}
In \(4d\), the Coulomb-branch low-energy dynamics of many \(\mathcal N=2\)
theories is determined by a curve \(\Sigma\), known as the Seiberg--Witten
curve, and a meromorphic differential
\(\lambda\in H^0(\Sigma,K_\Sigma(D))\) for an appropriate pole divisor \(D\).
Coordinates on the moduli space of vacua of the theory are given by periods
\[
    a_i=\int_{A_i}\lambda,\qquad
    a_D^i=\int_{B_i}\lambda
\]
\cite{seiberg_monopole_1994,seiberg_monopoles_1994,Freed_1999}.

More invariantly, for a pure gauge theory the Coulomb branch \(B\) is the base
of a bundle \(\Gamma\) of electric-magnetic charge lattices with an integral
antisymmetric pairing \(\langle\cdot,\cdot\rangle\), and a holomorphic central
charge function \(Z:\Gamma\to\mathbb C\). In a local electric-magnetic duality
frame, the electric central charges give special coordinates \(a_i\), while the
magnetic central charges are
\[
    a_{D,i}=\frac{\partial\mathcal F}{\partial a_i}
\]
for the prepotential \(\mathcal F\). BPS particles of charge \(\gamma\) have
masses bounded below by \(|Z_\gamma|\), with equality for BPS states.

There are several known classes of \(4d\) \(\mathcal N=2\) theories. Class \(S\)
theories are defined by a choice of ADE Lie algebra \(\mathfrak g\), a curve
\(\mathcal C\), and data at the punctures of \(\mathcal C\). The \(4d\) theory
is obtained by a twisted compactification of a \(6d\) \(\mathcal N=(2,0)\)
theory on the curve \(\mathcal C\) \cite{gaiotto_n2_2012}. The
Seiberg--Witten curve of the resulting \(4d\) theory is the spectral curve of
the Hitchin system on \(\mathcal C\), which is a branched cover
\(\Sigma\subset T^*\mathcal C\)
\cite{hitchin_self-duality_1987,hitchin_1987,gaiotto_wall-crossing_2011}.
The structure of a particular cover is determined by meromorphic differentials
\[
    \phi_k=P_k(\Phi)\in H^0(\mathcal C,K_{\mathcal C}^k(D_k)),
\]
where \(\Phi\) is the Hitchin field, \(P_k\) are invariant polynomials of the
Lie algebra and \(D_k\) are the allowed poles. The non-singular coefficients of
\(\phi_k\) form a set of coordinates on the Coulomb branch, while their
puncture data encodes mass parameters and flavor symmetries.

For Lagrangian \(\mathcal N=2\) theories, the Seiberg--Witten curves were
originally deduced from the weak-coupling behavior of the theory, as well as
monodromies of central charges around cycles and singularities where a
particle becomes massless
\cite{seiberg_monopole_1994,seiberg_monopoles_1994}. Many of these curves were
also given a later geometric interpretation, for example as spectral curves of
algebraic integrable systems
\cite{Donagi_1996,donagi_seiberg-witten_1997,takasaki_spectral_1997,Martinec_1996,Gorsky_1995,D_Hoker_1998},
M-theory fivebrane curves \cite{Witten_1997,klemm_self-dual_1996}, or special
cases and limits of class \(S\) Hitchin systems
\cite{gaiotto_n2_2012,gaiotto_wall-crossing_2011}.

\paragraph{Flat connections.}
After compactification on a circle, the low-energy theory is a
three-dimensional sigma model whose target \(\mathcal M\) is the Hitchin
moduli space. It is fibered over the four-dimensional Coulomb branch \(B\). We
restrict to the \(A_1\) case, where the relevant Hitchin moduli space can be
described in complex structure \(J\) as a moduli space of flat
\(SL(2,\mathbb C)\)-connections with fixed conjugacy classes around punctures,
modulo gauge transformations. Those are in one-to-one correspondence with
\(SL(2,\mathbb C)\) representations of \(\pi_1(\mathcal C)\) with
fixed conjugacy classes at the punctures \(P\), modulo total conjugation:
\[
    \mathcal M
    =
    \{\rho:\pi_1(\mathcal C)\to SL(2,\mathbb C)
    \text{ with fixed puncture conjugacy classes}\}/SL(2,\mathbb C).
\]
The symplectic structure holomorphic with respect to the complex structure
\(J\) on \(\mathcal M\) is the Atiyah--Bott--Goldman symplectic form
\[
    \Omega_J=\int_{\mathcal C}\operatorname{Tr}(\delta\mathcal A\wedge\delta\mathcal A).
\]

There are several notable choices of Darboux coordinates on \(\mathcal M\).
Fock--Goncharov coordinates are associated to WKB triangulations and play a
central role in the GMN construction of canonical Darboux coordinates
\cite{gaiotto_wall-crossing_2011}. By contrast, the NRS coordinates are
adapted to a pants decomposition and are expressed directly in terms of
monodromy traces \cite{nekrasov_darboux_2013}. Our analysis concerns the WKB
asymptotics of the NRS coordinates, not the WKB triangulations used in the GMN
construction.

\paragraph{Brane of opers.}
Consider the \(4d\) theory on
\[
    X_4=C_{\mathrm{cigar}}\times S^1\times\mathbb R,
\]
with \(C_{\mathrm{cigar}}\) a two-dimensional manifold with cigar metric
\[
    g_{\mathrm{cigar}}=dr^2+f(r)^2d\phi^2 .
\]
Here \(f(r)\sim r\) as \(r\to0\), while \(f(r)\to R_{\mathrm{cigar}}\) as
\(r\to\infty\). Introduce an \(\Omega\)-deformation
\cite{nekrasov_seiberg-witten_2002} parameter \(\hbar\) for the cigar isometry
\(\partial_\phi\). Upon compactification on
\(S^1_{\mathrm{cigar}}\times S^1\), one gets a two-dimensional sigma model on
\(\mathbb R_+\times\mathbb R\) with target space \(\mathcal M\). In a
particular choice of the duality frame, the boundary at the tip of the cigar
was identified with the brane of opers \(B_{\mathcal O_\tau}\), corresponding
to a Lagrangian submanifold
\[
    \mathcal O_\tau\subset\mathcal M
\]
\cite{nekrasov_omega_2010,beilinson_opers_2005,kapustin2007electricmagneticdualitygeometriclanglands}.
The boundary condition at infinity becomes another Lagrangian brane
\(B_\gamma^\infty\), independent of the four-dimensional coupling \(\tau\).
The Lagrangian variety \(\mathcal O_\tau\subset\mathcal M\) is the space of
flat connections that are equivalent, in some local trivialization, to a
second-order meromorphic differential operator of the form
\[
    -\partial^2+\frac{1}{\hbar^2}T(z; U).
\]

Another way to look at this system is through the lens of an
\(\mathcal N=(2,2)\) two-dimensional theory on \(S^1\times\mathbb R\). The
\(\Omega\)-deformation parameter \(\hbar\) is the twisted mass for the
\(U(1)\) rotation symmetry of the cigar. At low energy, the theory has an
effective twisted superpotential
\[
    \widetilde W_{\mathrm{eff}}(a,\tau,\mathfrak m,\hbar,\gamma)
    =
    \hbar\left(
    W_{\mathcal O_\tau}
    \left(\frac{a}{\hbar},\frac{\mathfrak m}{\hbar}\right)
    -
    W_\infty
    \left(\frac{a}{\hbar},\frac{\mathfrak m}{\hbar},\gamma\right)
    \right),
\]
as proposed in \cite{nekrasov_darboux_2013} and proved in
\cite{jeong_opers_2018}. The generating function of the oper variety can be
expressed as a limit of the fully \(\Omega\)-deformed four-dimensional theory:
\[
    W_{\mathcal O_\tau}(a,\nu)
    =
    \frac{1}{\hbar}
    \lim_{\epsilon_2\to0}
    \epsilon_2\log
    Z(a\hbar,\tau,\nu\hbar;\hbar,\epsilon_2).
\]
In the limit \(\hbar=\epsilon_1\to0\), \(\epsilon_2=0\), the monodromy
variables
\[
    \alpha=\frac{a}{\hbar},\qquad
    \nu=\frac{\mathfrak m}{\hbar}
\]
go to infinity. Therefore, the limit \(\hbar\to0\) does not probe a compact
region of the oper Lagrangian
\(\mathcal O_\tau\subset\mathcal M\). It probes an asymptotic region in which
the monodromy parameters become large. As we will study below, the particular
behavior depends on the values of \(a\) and \(\mathfrak m\) in a tropical way.

\paragraph{The \(SU(2)\), \(N_f=4\) dictionary.}
The example studied in this paper is the \(A_1\) class \(S\) theory associated
to a four-punctured sphere
\[
    \mathcal C=\mathbb{CP}^1\setminus\{z_1,z_2,z_3,z_4\}.
\]
This theory is the four-dimensional \(\mathcal N=2\) \(SU(2)\) gauge theory
with four fundamental hypermultiplets. The complex structure modulus of the
four-punctured sphere is the exactly marginal gauge coupling of the
four-dimensional theory. In a weak-coupling channel we may choose coordinates
\[
    (z_1,z_2,z_3,z_4)=(0,q,1,\infty),
    \qquad q=e^{2\pi i\tau_{UV}},
\]
so that the degeneration \(q\to0\) corresponds to the weak-coupling limit in
the pants decomposition separating \(\{z_1,z_2\}\) from \(\{z_3,z_4\}\).

The conjugacy class of the monodromy around the puncture \(z_i\) is fixed by
\[
    \operatorname{Tr} g_i=\mu_i=2\cos(2\pi\nu_i).
\]
In the semiclassical limit of the oper equation we keep
\[
    \mathfrak m_i=\hbar\nu_i
\]
fixed. These parameters are the residues of the Seiberg--Witten differential
at the punctures. The four physical hypermultiplet masses are linear
combinations of these residue parameters; in the convention used below one
may take
\[
    M_1=\mathfrak m_1+\mathfrak m_2,\qquad
    M_2=\mathfrak m_1-\mathfrak m_2,\qquad
    M_3=\mathfrak m_3+\mathfrak m_4,\qquad
    M_4=\mathfrak m_3-\mathfrak m_4,
\]
up to the usual signs, permutations, and normalization conventions for the
\(SO(8)\) flavor weights. Thus the puncture data encode the mass parameters,
while the single complex modulus \(q\) encodes the UV gauge coupling.

The remaining complex modulus of the Seiberg--Witten curve is the Coulomb
branch parameter \(U\). In the WKB limit the oper
\[
    \left(\partial_z^2-\hbar^{-2}T(z; U)\right)\psi=0
\]
degenerates to the Seiberg--Witten curve
\[
    \Sigma_U:\quad p^2=T(z; U),
    \qquad
    \lambda=p\,dz.
\]
The periods of \(\lambda\) give the electric and magnetic special coordinates
\[
    a=\oint_{\gamma}\lambda,\qquad
    a_D=\oint_{\gamma_D}\lambda.
\]
The main point of the paper is then that the WKB asymptotics of the NRS
coordinates \((\alpha,\beta)\) on the oper Lagrangian admit a tropical chamber
decomposition. In favorable chambers these asymptotics reproduce an integral
symplectic pair of Seiberg--Witten periods, up to flavor-period shifts. In other
chambers the same tropical extraction can still be well defined, but the chosen
NRS chart need not produce an integral electric-magnetic period basis.

\paragraph{Scope and main technical output.}
The result of this paper is deliberately chamberwise. Fix a generic Stokes
chamber of the \(SL_2\)-oper
\[
    \left(\partial_z^2-\hbar^{-2}T(z)\right)\psi=0,
    \qquad \hbar\to0,
\]
a ray \(\arg\hbar\), and branches of the relevant logarithms. The exact-WKB
monodromy formulae express the trace functions as finite Laurent sums of Voros
exponentials. In a tropical subchamber with unique dominant terms and no
leading cancellation, the NRS relations reduce to an integer affine linear
system for the leading logarithms of \(\alpha_{\mathrm{NRS}}\) and
\(\beta_{\mathrm{NRS}}\).

The paper computes this mechanism explicitly for the generic Stokes graph types
and uses it to identify integral-period chambers in which the linear system is
unimodular, modulo flavor-period shifts, and the selected cycles form a
primitive symplectic pair on the Seiberg--Witten curve
\[
    \lambda=p\,dz,\qquad \Sigma_U: p^2=T(z;U).
\]
In such chambers,
\[
    \alpha_{\mathrm{NRS}}
    =
    \frac{1}{\hbar}\int_{\gamma_\alpha}\lambda+O(1),
    \qquad
    \beta_{\mathrm{NRS}}
    =
    \frac{1}{\hbar}\int_{\gamma_\beta}\lambda+O(1),
    \qquad
    \langle\gamma_\alpha,\gamma_\beta\rangle=\pm1.
\]
We do not prove that every tropical chamber has this property. We call a
subchamber non-integral-period if the same tropical extraction is well defined
but the resulting leading logarithms are rational, non-primitive, degenerate,
dominated by flavor-period data, or otherwise fail to give an integral
symplectic period basis in the chosen NRS chart.

Crossing a tropical wall changes the dominant Voros exponentials and hence can
replace \((\gamma_\alpha,\gamma_\beta)\) by another integral combination of
period data, or can move the asymptotics into a non-integral-period chamber.

\paragraph{Relation to wall-crossing.}
The WKB method used here is parallel to the GMN WKB analysis, but applied to a
different set of Darboux coordinates. The semiclassical limit here is different
in origin: it is the \(\Omega\)-deformation parameter \(\hbar\), instead of the
inverse-radius/twistor parameter in the circle compactification used in
\cite{gaiotto_four-dimensional_2010}. GMN use WKB triangulations to construct
canonical Fock--Goncharov coordinates \(X_\gamma\) on the Hitchin moduli space
and study their Kontsevich--Soibelman wall-crossing. Here we instead use exact
WKB monodromy formulae for \(SL_2\)-opers and then express the resulting
monodromy traces in NRS coordinates. In our approach, the Coulomb-branch point
and the Hitchin spectral curve are obtained in the limit \(\hbar\to0\), while
in the works of Gaiotto et al. the Hitchin spectral curve is the starting
point.

Thus the comparison is not an equality of coordinate systems. Rather, both
constructions probe the same hyperkähler moduli space and must reduce, in
their respective flat-space limits, to the same Seiberg--Witten period
geometry. The calculation below verifies this statement directly for
\(SL_2\)-opers on the four-punctured sphere.

\paragraph{Acknowledgements.} I would like to express my deep gratitude to Professor Nikita Nekrasov, who introduced me
to this field of study and guided me through the research. Research was supported by NSF PHY Award 2310279 and by the Simons Collaboration ``Probabilistic Paths to QFT''.

\section{Flat $SL_2(\mathbb C)$-connections on the four-punctured sphere}
Let $G$ be a connected complex Lie group and $\mathfrak g$ its Lie algebra, equipped with a non-degenerate invariant symmetric bilinear form
$\ev{\bullet,\bullet}$. Consider a principal $G$-bundle $\mathcal P$ over a Riemann surface $\mathcal C$ of genus $g$ with $n$ punctures. 
Any connection $\nabla$ on $\mathcal P$ in local coordinates can be written in the form $\nabla_A = \dd +A$, where $A$ is a differential $1$-form with values in $\mathfrak g$. $A$ is defined on the local chart and depends on a choice of local trivialization of $P$. 
Under a change of local trivialization, $A$ transforms as
\begin{equation}
	\nabla_A \mapsto g^{-1} \nabla_A g, \quad 
	A \mapsto g^{-1} A g + g^{-1} \dd g .
\end{equation}

The curvature of the connection $\nabla_A$ is a $2$-form $F_A$ with values in the adjoint bundle $\ad P$ defined as
\begin{equation}
	F_A = \nabla_A^2 = \dd A + A \wedge A 
	= \frac12(\partial_i A_j - \partial_j A_i +[A_i, A_j]) 
	\dd x^i \wedge \dd x^j .
\end{equation}

Now we can define the moduli space $\mathcal M_{g,n;\nu}$ of flat $G$-connections on the punctured surface 
$\mathcal C$. 
On its smooth locus, it consists of connections with $F_A=0$ and fixed conjugacy classes of holonomies around the punctures, modulo the action of the gauge group $\mathcal G$:
\begin{equation}
	\mathcal M_{g,n;\nu} = 
	\left\{ \nabla_A: F_A = 0, \left[\Pexp\left(\oint_{z_i} A\right)\right] = \nu_i\right\}/\mathcal G .
\end{equation}
Here $z_i$ are the punctures of $\mathcal C$, the integral is taken around small counterclockwise loops enclosing the punctures, and $[\bullet]$ denotes the conjugacy class of the path-ordered exponential.

There is a one-to-one correspondence between flat connections modulo gauge transformations and $\nu$-constrained representations of the fundamental group 
$\pi_1(\mathcal C)$ up to conjugation:
\begin{equation}
	\mathcal M_{g,n;\nu} = \Hom(\pi_1(\mathcal C), G)^\nu / G .
\end{equation}
One can parametrize a point in $\mathcal M_{g,n;\nu}$ by fixing the monodromies around a set of generators of the fundamental group, such that the monodromies around the punctures $z_i$ have the prescribed conjugacy classes $\nu_i$. 

Thus, we have defined the moduli space of flat connections as a character variety. Strictly speaking, it can have singularities; throughout this paper we work on the smooth locus where the symplectic form is non-degenerate.
It also has a holomorphic symplectic structure. Recall the symplectic form $\Omega$ on the space of all $G$-connections on $\Sigma$:
\begin{equation}
	\Omega = \frac12\int_{\Sigma} \ev{\delta A \wedge \delta A}.
\end{equation}

It is easy to check that $\Omega$ is invariant with respect to the action of $\mathcal G$. 
One convenient way to describe the reduction is to remove small disks around the punctures and work with a surface with boundary. Let \(\mathcal G_0\) be the group of gauge transformations that restrict to the identity on the boundary circles.
The first step is to consider the symplectic quotient by the group $\mathcal G_0$. The moment map is the curvature of the connection, so the zero level of the moment map is the space of flat connections. After quotienting by $\mathcal G_0$, one obtains the moduli space of framed flat connections, where the framings are placed at the boundary circles around the punctures.

The remaining quotient by the boundary gauge transformations changes the framings and acts by conjugation on the boundary monodromies. Fixing the conjugacy classes of these monodromies selects a symplectic leaf of the resulting Poisson variety. This symplectic leaf is precisely \(\mathcal M_{g,n;\nu}\).
This construction in more detail can be found in \cite{fock_poisson_1998}.

There is a system of local holomorphic Darboux coordinates $\alpha_i, \beta_i$ on $\mathcal M_{g,n;\nu}$ defined in \cite{nekrasov_darboux_2013}, which we will call NRS coordinates.
They depend on a choice of pants decomposition and are Darboux coordinates on an appropriate open chart of the character variety.

\paragraph{$G=SL(2,\mathbb C)$ and $g=0$.} 
Now we restrict to $G=SL(2,\mathbb C)$ and to the Riemann sphere with $n=4$ punctures. The moduli space $\mathcal M_{0,4;\nu}$ is parametrized by four-tuples of elements of $SL(2,\mathbb C)$:
\begin{equation}
	\mathcal M_{0,4;\nu} = 
	\left\{ (g_1,g_2,g_3,g_4)\in G^4: 
	g_1g_2g_3g_4 = 1, \Tr g_i = \mu_i = 2\cos(2\pi\nu_i) \right\} / G .
\end{equation}
The group $G$ acts by simultaneous conjugation of all $g_i$. 

A useful set of functions on $\mathcal M_{0,4;\nu}$ is given by traces of monodromies around pairs of punctures:
\begin{equation}
	\begin{split}
		A &= \Tr(g_1g_2),\\
		B &= \Tr(g_2g_3),\\
		C &= \Tr(g_1g_3).
	\end{split}
\end{equation}
The smooth part of $\mathcal M_{0,4;\nu}$ is two complex dimensional. Thus, the functions $A$, $B$ and $C$ are not independent. They satisfy the Fricke relation
\begin{multline}
W_{0,4}(A,B,C) = ABC+A^2+B^2+C^2-4
-A(\mu_3\mu_4+\mu_1\mu_2)-B(\mu_1\mu_4+\mu_2\mu_3)-\\
-C(\mu_1\mu_3+\mu_2\mu_4)+\mu_1^2+\mu_2^2+\mu_3^2+\mu_4^2+\mu_1\mu_2\mu_3\mu_4=0 .
\label{eq:constraint04}
\end{multline}

This relation holds for any choice of $g_i\in SL(2,\mathbb C)$ such that $g_1g_2g_3g_4=e$ and $\Tr g_i = \mu_i$.

The functions $A, B, C$ can be expressed in terms of the NRS coordinates $\alpha,\beta$ on 
$\mathcal M_{0,4;\nu}$ \cite{nekrasov_darboux_2013}:
\begin{equation}
\label{eq:alphabetadefinition}
\begin{split}
    A&=e^\alpha+e^{-\alpha},\\
    B(A^2-4)+2(\mu_2\mu_3+\mu_1\mu_4)-A(\mu_1\mu_3+\mu_2\mu_4)
    &=\sqrt{c_{12}(A)c_{34}(A)}(e^\beta+e^{-\beta}),\\
    (2C + AB -\mu_1\mu_3-\mu_2\mu_4)(e^\alpha - e^{-\alpha}) 
    &=(e^\beta-e^{-\beta}) \sqrt{c_{12}(A)c_{34}(A)},
\end{split}
\end{equation}
where
\begin{equation}
    c_{ij}(A) = A^2+\mu_i^2+\mu_j^2-A\mu_i\mu_j-4 .
\end{equation}
The coordinates \(\alpha,\beta\) are multi-valued because they involve choices of logarithms and square roots. Once these branches are fixed, they form Darboux coordinates on the corresponding open chart.

\section{Opers and the semiclassical limit}

An $SL(2,\mathbb C)$-oper can be described, in the present rank-two setting, as a second-order differential operator $\mathcal D$ acting on the space $\Omega^{-1/2}$ of $(-\frac{1}{2})$-differentials:
\begin{equation}
\label{eq:oper}
	\mathcal D: \Omega^{-1/2} \to  \Omega^{3/2}.
\end{equation}
Equivalently, it is a projective connection. This differential-operator description is the form most useful for the WKB analysis below.

Locally, the $SL(2,\mathbb C)$-oper is a differential operator of the form 
$\mathcal D = -\partial_z^2+ \hbar^{-2} T(z)$. The definition \eqref{eq:oper} provides the transformation rule for $T(z)$ under a change of coordinates $z=z(w)$:
\begin{equation}
	(-\partial_w^2 + \hbar^{-2}\tilde T(w)) (z')^{-1/2} 
	= (z')^{3/2}(-\partial_z^2 + \hbar^{-2}T(z(w))) .
\end{equation}

Thus, the transformation rule is
\begin{equation}
	\tilde T(w) = (z')^2 T(z(w)) + \frac{\hbar^2}{2} \{z, w\}, 
\end{equation}
where
\begin{equation}
	\{z,w\} = \left(\frac{z''}{z'}\right)' - \frac{1}{2} \left(\frac{z''}{z'}\right)^2
\end{equation}
is the Schwarzian derivative. 

A second-order differential operator $\mathcal D=-\partial_z^2 + \hbar^{-2}T(z)$ is equivalent to a holomorphic connection of the form
\begin{equation}
	\nabla = \partial_z + 
	\begin{pmatrix}
		0 & -\hbar^{-2} T(z) \\
		-1 & 0
	\end{pmatrix}.
\end{equation}

In the $g=0$, $n=4$ case, the $SL(2,\mathbb C)$-oper is determined by a meromorphic function $T(z)$ on 
$\mathbb{CP}^1$ with at most second-order poles at the punctures. 
In an affine coordinate in which all punctures $z_i$ are finite, the absence of an additional pole at infinity imposes $T(z)=O(z^{-4})$ as $z\to\infty$
The generic function that satisfies these conditions has the form
\begin{equation}
	T(z) = \sum_{i=1}^n \left(\frac{\hbar^2 \Delta_i}{(z-z_i)^2}+\frac{\hbar^2 \epsilon_i}{z-z_i}\right).
\end{equation}
Here
\begin{equation}
	\Delta_i = \nu_i(\nu_i-1)
\end{equation}
is determined by the local monodromy exponent. 
The local exponents of the scalar equation are \(\nu_i\) and \(1-\nu_i\), so the corresponding monodromy trace is $2\cos(2\pi\nu_i)$, up to the usual ambiguity $\nu_i\sim 1-\nu_i\sim \nu_i+1$

The complex numbers $\epsilon_i$ are called the accessory parameters. For $T(z)$ to be regular at infinity they must satisfy 
\begin{equation}
	\label{eq:epsiloncond}
\begin{split}
    \sum_{i=1}^n \epsilon_i &= 0, \\
    \sum_{i=1}^n (\epsilon_i z_i +\Delta_i) &= 0,\\
    \sum_{i=1}^n (\epsilon_i z_i^2 + 2\Delta_i z_i) &= 0.
\end{split}
\end{equation}

The set of accessory parameters $\epsilon_i$ together with the positions of poles $z_i$ defines the $SL(2,\mathbb C)$-oper on $\Sigma$, which in turn determines a point of the character variety $\mathcal M_{0,n;\nu}$ through its monodromy representation. 
For fixed puncture positions and fixed local exponents \(\nu_i\), the three equations \eqref{eq:epsiloncond} leave \(n-3\) independent accessory parameters. If the puncture positions are also allowed to vary, then after quotienting by \(PGL(2,\mathbb C)\) the total number of parameters is
\[
(n-3)+(n-3)=2n-6.
\]
For \(n=4\), this gives a two-dimensional space, matching the dimension of \(\mathcal M_{0,4;\nu}\).

We keep the four punctures \(z_1,z_2,z_3,z_4\) finite and distinct.
The complex structure of the four-punctured sphere is encoded by their
cross-ratio
\[
q=\frac{(z_1-z_3)(z_2-z_4)}{(z_1-z_4)(z_2-z_3)} .
\]
One may use \(PGL_2(\mathbb C)\) to place three punctures at convenient finite values, but we will not put any puncture at infinity. For a complex structure, opers form a Lagrangian $n-3$-dimensional variety in $\mathcal M_{0,n;\nu}$.

The semiclassical limit considered later is the limit \(\hbar\to0\) in which the quadratic differential \(T(z)\) has a nontrivial limit. Equivalently, one keeps
\[
\mathfrak{m}_i=\hbar\nu_i,\qquad \tilde \epsilon_i=\hbar^2\epsilon_i
\]
fixed. In this limit
\[
T(z)=\sum_i\left(\frac{\mathfrak{m}_i^2+O(\hbar)}{(z-z_i)^2}
+\frac{\tilde\epsilon_i}{z-z_i}\right)
\]
becomes the quadratic differential defining the Seiberg--Witten curve \(p^2=T(z)\). Thus the WKB limit of the oper equation is the limit in which the oper data degenerate to the classical Seiberg--Witten geometry.

One can isolate the single Coulomb branch parameter $U\in \mathbb C$ by studying variations of $T(z)$ that keep the second-order poles fixed. There is only one meromorphic function one can write that has single-order poles at $z_i$ and has $O(z^{-4})$ asymptotic at infinity. One obtains
\begin{equation}
    T(z;U) = \frac{P(z)}{\prod_{i=1}^4(z-z_i)^2}+U\prod_{i=1}^4 \frac{1}{(z-z_i)},
    \label{eq:T_with_single_parameter}
\end{equation} 
where $\frac{P(z)}{\prod_{i=1}^4(z-z_i)^2}$ has pole structure fixed by $z_i$ and $\mathfrak{m}_i$.

\section{Exact WKB analysis and Stokes graphs}
\label{sec:WKB}

Throughout this section we take the semiclassical limit \(\hbar\to0\) with the quadratic differential \(T(z)\,\dd z^2\) fixed. Equivalently, \(\mathfrak{m}_i=\hbar\nu_i\) and \(\tilde\epsilon_i=\hbar^2\epsilon_i\) are held fixed. When \(\hbar\) is allowed to be complex, the Stokes graph depends only on the phase \(\vartheta=\arg\hbar\); if not stated otherwise we take \(\hbar>0\), so \(\vartheta=0\). In terms of the four-dimensional theory, the parameters $\mathfrak{m}_1,\mathfrak{m}_2,\mathfrak{m}_3,\mathfrak{m}_4$ determine the combinations $\mathfrak{m}_1\pm \mathfrak{m}_2$ and $\mathfrak{m}_3\pm \mathfrak{m}_4$ of the four fundamental hypermultiplet masses of the $SU(2)$ theory, while the cross-ratio $q$ is related to the gauge coupling by $q=e^{2\pi i\tau_{UV}}$.

\paragraph{Exact WKB method.} The goal of the WKB approximation is to obtain the monodromies of the Schrödinger equation in the complex plane
\begin{equation}
    \left(\frac{\dd^2}{\dd z^2}-\frac{1}{\hbar^2}T(z)\right)\psi(z)=0
    \label{eq:Schrodinger}
\end{equation}
The sign in \eqref{eq:Schrodinger} is chosen so that the spectral curve is \(p^2=T(z)\). Here $T(z)$ is a meromorphic quadratic differential coefficient on $\mathcal C$ with double poles at the punctures.

Let $\psi(z)=\exp R(z)$ and set $S(z)=\partial_z R(z)$. Then $S$ satisfies the Riccati equation
\begin{equation}
    S^2+\pdv{S}{z}=\frac{1}{\hbar^2}T(z)
    \label{eq:Sequation}
\end{equation}
We look for two formal solutions of the form
\begin{equation}
    S^{(\pm)}(z,\hbar)=\pm\frac{1}{\hbar}S_{-1}(z)+S_0(z)\pm \hbar S_1(z)+\hbar^2S_2(z)\pm\cdots
\end{equation}
Equivalently, \(S^{(+)}\) and \(S^{(-)}\) are exchanged by the deck transformation of the double cover \(\Sigma \to\mathcal C\). The coefficients are determined recursively. In particular,
\begin{equation}
\begin{split}
    S_{-1}^2 &= T,\\
    2S_{-1}S_j &= -\left(\sum_{k+l=j-1;\; k,l\ge0}S_kS_l+\pdv{S_{j-1}}{z}\right),\qquad j\ge0
\end{split}
\label{eq:recurrence}
\end{equation}
Here the sign choice of \(S_{-1}=p\) chooses a sheet of the spectral curve
\begin{equation}
    \Sigma: \qquad p^2=T(z).
\end{equation}

Define
\begin{equation}
    S_{\mathrm{odd}}=\frac{S^{(+)}-S^{(-)}}{2}=\frac{1}{\hbar}S_{-1}+\hbar S_1+\hbar^3S_3+\cdots,
\end{equation}
\begin{equation}
    S_{\mathrm{even}}=\frac{S^{(+)}+S^{(-)}}{2}=S_0+\hbar^2S_2+\hbar^4S_4+\cdots
\end{equation}
Substituting $S=S_{\mathrm{odd}}+S_{\mathrm{even}}$ into \eqref{eq:Sequation} and taking the part odd under the sheet involution gives
\begin{equation}
    S_{\mathrm{even}}=-\frac12\pdv{}{z}\log S_{\mathrm{odd}}.
\end{equation}
Thus the normalized formal WKB solutions can be written as
\begin{equation}
    \widehat\psi_{\pm,z_0}(z)=\frac{1}{\sqrt{S_{\mathrm{odd}}(z)}}
    \exp\left(\pm\int_{z_0}^z S_{\mathrm{odd}}(z,\hbar)\,\dd z\right).
    \label{eq:formal-wkb-solutions}
\end{equation}
The coefficients $S_j(z)$ are meromorphic functions on the spectral curve $\Sigma$; away from the singularities of the differential equation their possible poles occur at zeroes and poles of $T$. This follows recursively from \eqref{eq:recurrence}: differentiating and multiplying already constructed meromorphic functions cannot create new singularities, while division by $S_{-1}=p$ can introduce poles at the branch points of $\Sigma$.

The series \eqref{eq:formal-wkb-solutions} is formal and is usually divergent. Exact WKB analysis assigns actual local solutions by Borel transforming and Laplace summing the formal series in sectors where the relevant Borel sums exist. This is the sense in which the WKB solutions are "exact." We will use only the standard consequences of this theory; see \cite{voros_return_1983,kawai_takei_algebraic_2005,iwaki_nakanishi_exact_2014} for general references and \cite{gaiotto_wall-crossing_2011} for the relation to WKB triangulations in the Hitchin-system setting.

The Stokes curves are the trajectories along which the two exponential factors in \eqref{eq:formal-wkb-solutions} have equal phase. More invariantly, for phase $\vartheta=\arg\hbar$ they are the horizontal trajectories of the rotated quadratic differential $e^{-2i\vartheta}T(z)\dd z^2$:
\begin{equation}
    \Im\left(e^{-i\vartheta}\int_{u_i}^z \lambda\right)=0, \quad \lambda = p(z)\dd z
    \label{eq:Stokes_curves_definition}
\end{equation}
A simple zero $u_i$ of $T$ is a simple turning point, and three Stokes curves emanate from it in the $z$-plane.

A Stokes curve joining two zeroes is usually called a saddle connection or finite Stokes trajectory. Its existence is a real codimension-one condition
\begin{equation}
    \Im\left(e^{-i\vartheta}\int_{u_i}^{u_j}\lambda\right)=0 
\end{equation}
In the generic chambers considered below we assume that there are no saddle connections and no closed annular domains, so each separating Stokes curve starting at a turning point ends at a pole. At the exceptional phases the Stokes graph mutates; this is the exact-WKB version of the flip of a WKB triangulation, and the corresponding Voros symbols undergo cluster transformations \cite{iwaki_nakanishi_exact_2014,allegretti_voros_2019}.

Let $\{U_i\}$ be the Stokes regions, i.e. the connected components of the complement of the Stokes graph. In each region $U_i$ a Borel-summed pair of WKB solutions gives a local basis
\[
    \psi^i(z)=\begin{pmatrix}\psi_+^i(z)\\ \psi_-^i(z)\end{pmatrix}.
\]
If $U_i$ and $U_j$ are adjacent Stokes regions, analytic continuation across the common Stokes curve gives a triangular Stokes matrix:
\begin{equation}
    \begin{pmatrix}
        \psi_+^j(z)\\
        \psi_-^j(z)
    \end{pmatrix}
    =M^{ij}(z_0,\gamma)
    \begin{pmatrix}
        \psi_+^i(z)\\
        \psi_-^i(z)
    \end{pmatrix}
    \label{eq:stokes-change-of-basis}
\end{equation}
The matrix depends on the normalization point $z_0$, the path used for analytic continuation, and the branch choices for $p$ and the square root in \eqref{eq:formal-wkb-solutions}.

We now describe the local behavior of the Stokes curves. Near a pole $z_i$ we have
\begin{equation}
    e^{-i\vartheta}\lambda
    =\left(\frac{e^{-i\vartheta}\mathfrak{m}_i}{z-z_i}+O(1)\right)\dd z
\end{equation}
If $v$ is defined by $(e^{-i\vartheta}\lambda)(v)=1$, then
\begin{equation}
    v=\left(\frac{e^{i\vartheta}(z-z_i)}{\mathfrak{m}_i}+O((z-z_i)^2)\right)\pdv{}{z}.
\end{equation}
The leading flow is
\begin{equation}
    z(t)-z_i=C\exp\left(\frac{e^{i\vartheta}}{\mathfrak{m}_i}t\right)
\end{equation}
so the Stokes curves approach the pole as logarithmic spirals, unless $e^{i\vartheta}/\mathfrak{m}_i$ is purely imaginary, in which case the local trajectories are circles. We exclude this degenerate case in the generic analysis below. We label a pole by $+$ if the chosen orientation of the Stokes trajectories is inward and by $-$ if it is outward.

Near a simple zero $u_i$ of $T$,
\begin{equation}
    \lambda=\left(k_i(z-u_i)^{1/2}+O((z-u_i)^{3/2})\right)\dd z.
\end{equation}
Thus, after passing to the local coordinate $w=\int_{u_i}^z\lambda$, the Stokes curves are the three rays on which $e^{-i\vartheta}w$ is real. On one branch of the square root, three trajectories meet the turning point, with adjacent sectors alternating between dominant and subdominant WKB behavior.

The Stokes graph $\mathcal S$ is the graph whose vertices are the zeroes and poles of $T$ and whose edges are the separating Stokes curves connecting them. In the generic situation above, it records the combinatorial data needed for the Stokes matrices and hence for the monodromy computation.

To compute monodromy along closed curves we need the Stokes matrices. If the WKB basis is normalized at the turning point $u_k$ from which the crossed Stokes curve emanates, the local connection formulas take the form
\begin{equation}
    M^{ij}(u_k)=
    \begin{pmatrix}
        1&0\\
        \pm i&1
    \end{pmatrix},
    \qquad \Re(e^{-i\vartheta}\lambda)<0,
    \label{eq:transition_matrices_negative}
\end{equation}
or
\begin{equation}
    M^{ij}(u_k)=
    \begin{pmatrix}
        1&\pm i\\
        0&1
    \end{pmatrix},
    \qquad \Re(e^{-i\vartheta}\lambda)>0
    \label{eq:transition_matrices_positive}
\end{equation}
The sign is positive for counterclockwise crossing around the turning point and negative for clockwise crossing, with the branch and normalization conventions fixed once and for all. These are the standard simple-turning-point Stokes matrices; different normalizations conjugate them by diagonal matrices.

If the basis is instead normalized at a common point $z_0$, let $\gamma_k$ be the path from $z_0$ to $u_k$ used to transport the WKB basis and define
\begin{equation}
    D_k=\begin{pmatrix}
        \exp\left(\int_{\gamma_k}S_{\mathrm{odd}}\,\dd z\right)&0\\
        0&\exp\left(-\int_{\gamma_k}S_{\mathrm{odd}}\,\dd z\right)
    \end{pmatrix}
\end{equation}
Then
\begin{equation}
    M^{ij}(z_0)=D_kM^{ij}(u_k)D_k^{-1}
\end{equation}
For convenience set
\begin{equation}
    w_k=\exp\left(2\int_{\gamma_k}S_{\mathrm{odd}}(z,\hbar)\,\dd z\right)
\end{equation}
Then the transition matrices at the common base point are
\begin{equation}
    M^{ij}(z_0)=
    \begin{pmatrix}
        1&0\\
        \pm i w_k^{-1}&1
    \end{pmatrix},
    \qquad \Re(e^{-i\vartheta}\lambda)<0,
\end{equation}
\begin{equation}
    M^{ij}(z_0)=
    \begin{pmatrix}
        1&\pm i w_k\\
        0&1
    \end{pmatrix},
    \qquad \Re(e^{-i\vartheta}\lambda)>0
\end{equation}

Finally, when the path goes once around a regular singularity $z_i$, the local WKB basis also acquires the formal local monodromy
\begin{equation}
    B_i=\begin{pmatrix}n_i&0\\0&n_i^{-1}\end{pmatrix},
    \qquad n_i=e^{2\pi i\nu_i}
    \label{eq:local-formal-monodromy}
\end{equation}

\paragraph{Classification of Stokes graphs for $n=4$.} For $SL(2)$-opers on the four-punctured sphere, the limiting quadratic differential $T(z)\dd z^2$ has four double poles. Since the divisor of a quadratic differential on $\mathbb{CP}^1$ has degree $-4$, the total number of zeroes is $2n-4=4$. 
Assuming all zeroes are simple, each zero emits three Stokes curves, so the generic separating Stokes graph has $3(2n-4)=12$ Stokes edges.

Under the generic completeness assumption above, every separating trajectory ends at one of the four double poles. Thus $\mathcal S$ is a bipartite graph with four zero-vertices and four pole-vertices. The Euler characteristic gives $F=2-V+E=2-8+12=6$
Moreover, each face has even boundary length at least four. Since the average boundary length is $\frac{2E}{F}=\frac{24}{6}=4$ each face is necessarily a quadrangle.

Because every face of $\mathcal S$ is a quadrangle, we can construct a dual graph $\widetilde{\mathcal S}$ whose vertices are the poles $z_i$ and whose edges are the diagonals of the quadrilateral faces connecting pairs of poles. The graph $\widetilde{\mathcal S}$ is the WKB triangulation associated with the Stokes graph, in the sense used in \cite{gaiotto_wall-crossing_2011,iwaki_nakanishi_exact_2014}. The six Stokes graphs shown in Figure \ref{fig:StokesGraphs} are all possible combinatorial configurations of Stokes lines with $4$ poles and $4$ turning points, as shown in \cite{kawai_takei_algebraic_2005}.

\begin{figure}[h]
\centering
\includegraphics[width=0.4\textwidth]{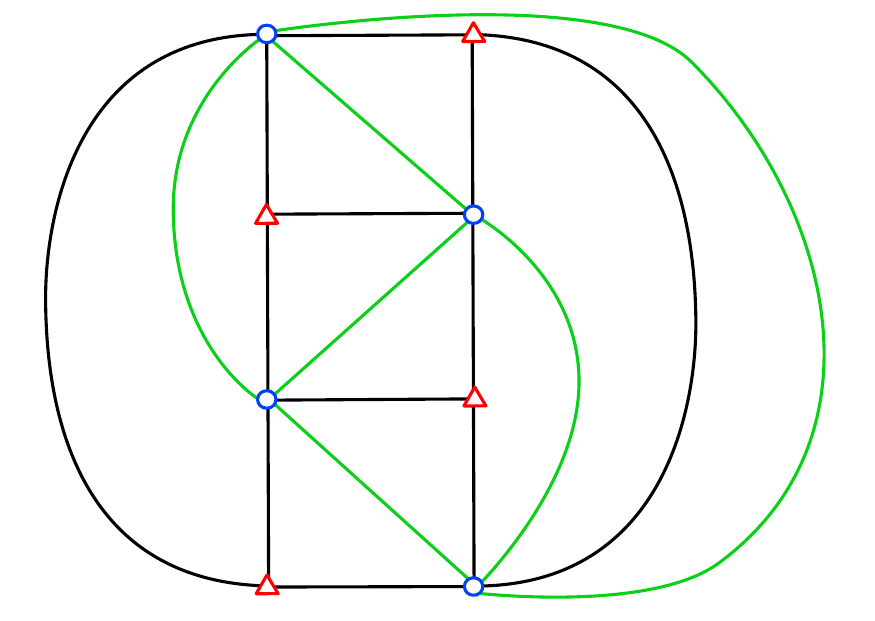}
\caption{Construction of $\widetilde {\mathcal S}$. The graph $\widetilde{\mathcal S}$ is shown in green and the Stokes graph $\mathcal S$ in black.}
\label{fig:StokesDualGraph}
\end{figure}

\begin{figure}[h]
\centering
    \subfloat[Type 1]{{\includegraphics[height=0.3\textwidth]{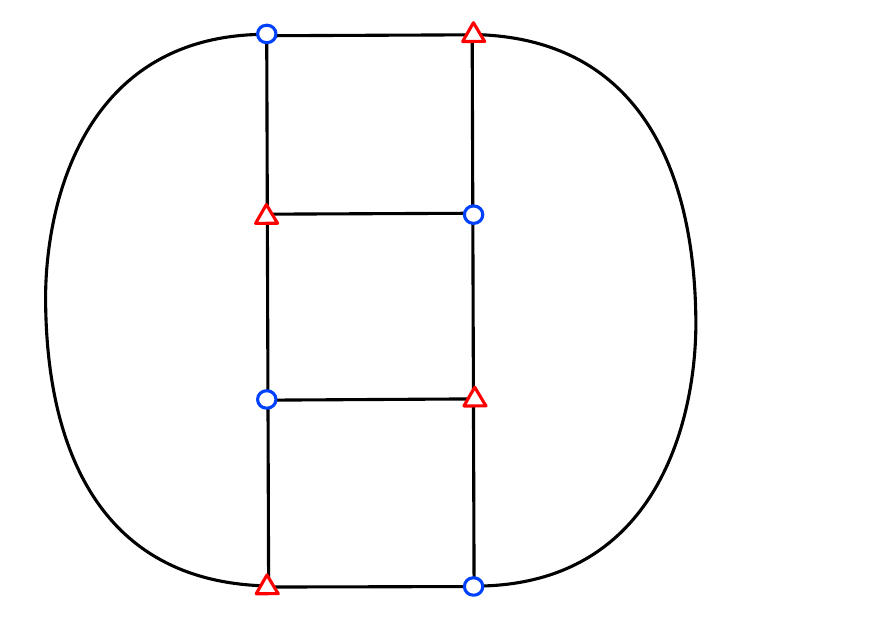} }}%
    \hfill
     \subfloat[Type 2]{{\includegraphics[height=0.3\textwidth]{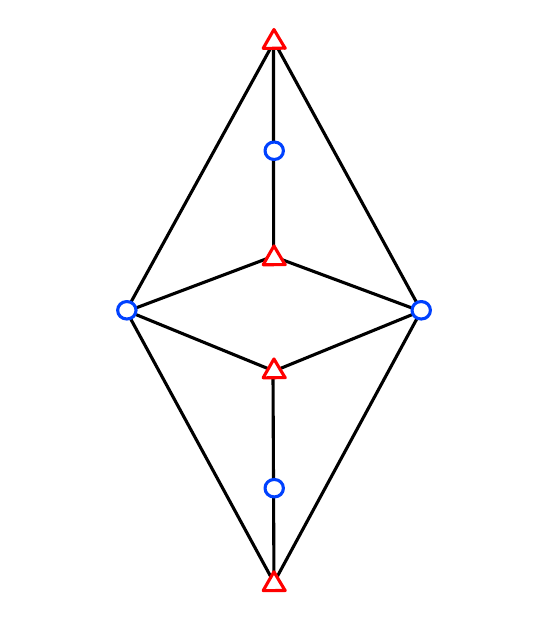} }}%
    \hfill
     \subfloat[Type 3]{{\includegraphics[height=0.3\textwidth]{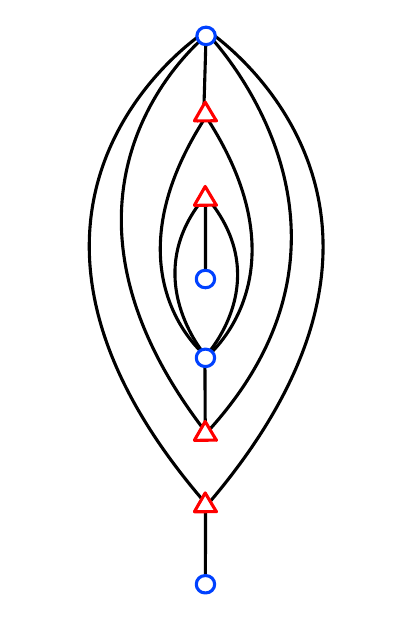} }}%
    \hfill
     \subfloat[Type 4]{{\includegraphics[height=0.3\textwidth]{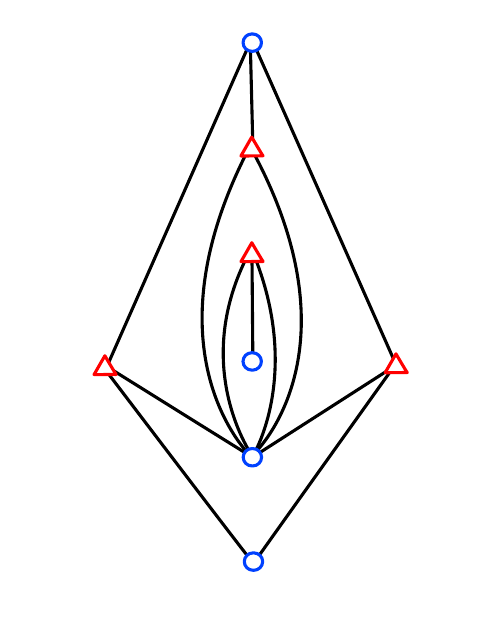} }}%
    \hfill
     \subfloat[Type 5]{{\includegraphics[height=0.3\textwidth]{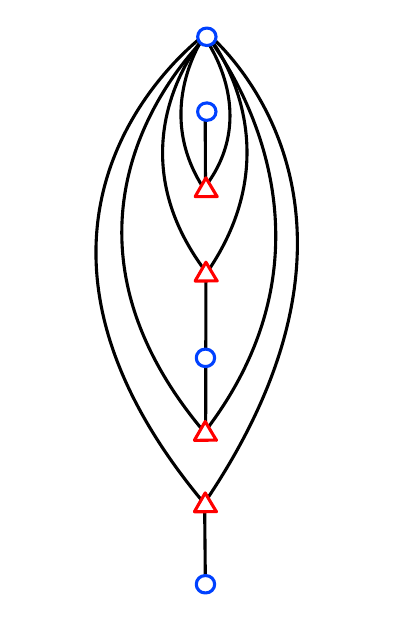} }}%
    \hfill
     \subfloat[Type 6]{{\includegraphics[height=0.3\textwidth]{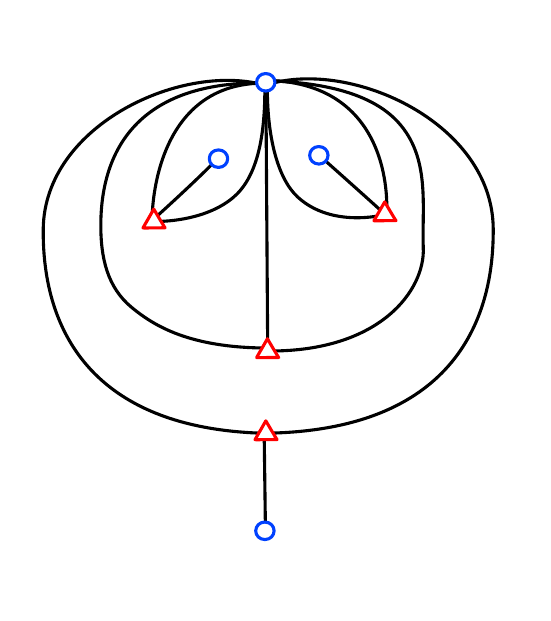} }}%
\caption{Six labeled types of Stokes graph for the four-punctured sphere. Calculations for each type are given in the Appendix.}
\label{fig:StokesGraphs}
\end{figure}

\paragraph{Calculation of monodromies.} Our goal is to calculate the trace functions
\[
    A=\Tr(g_1g_2),\qquad B=\Tr(g_2g_3),\qquad C=\Tr(g_1g_3),
\]
where $g_i$ is the monodromy around the pole $z_i$. The algorithm is the same in each Stokes chamber:
\begin{enumerate}
    \item Introduce cuts connecting pairs of zeroes, so that a single branch of $p=\sqrt{T}$ is chosen on the cut surface.
    \item Label poles by $+$ and $-$ according to the sign convention above. In each quadrilateral face not crossed by a cut, there should be one inward and one outward pole.
    \item Choose a base point $z_0$.
    \item Choose paths $\gamma_i$ from $z_0$ to the turning points $u_i$; these determine the variables $w_i$.
    \item Draw the loops representing the monodromies $g_i$.
    \item Write the Stokes transition matrices $M_{ij}(z_0)$ in terms of $w_i$ and $n_i$.
    \item Multiply the Stokes matrices and the corresponding local monodromy matrix $B_i=\mathrm{diag}(n_i,n_i^{-1})$ in the order dictated by the loop.
\end{enumerate}

We demonstrate the calculation for the Stokes graph of Type 1. The pictures and formulas for the remaining Stokes graphs can be placed in Appendix \ref{sec:Monodromies}.

\begin{figure}[h]
\centering
\includegraphics[width=0.5\textwidth]{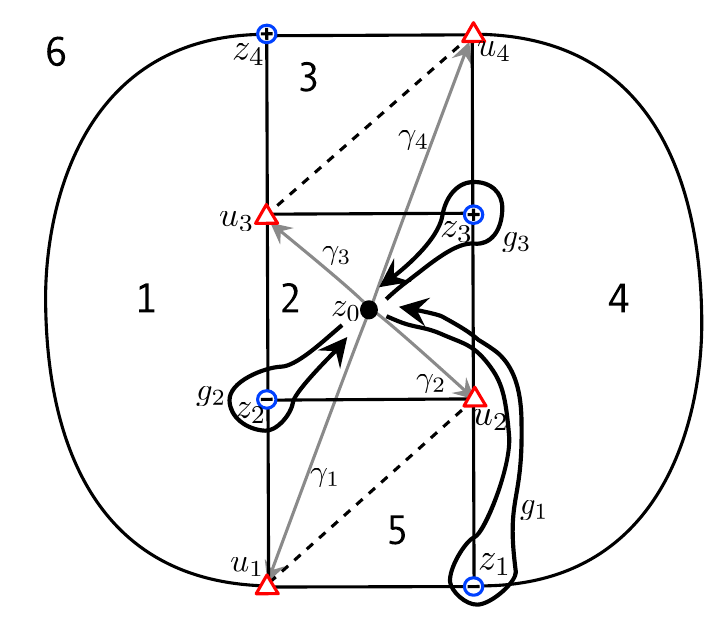}
\caption{Stokes graph of Type 1. Stokes regions are numbered $1$--$6$. Blue circles are poles, red triangles are zeroes, dashed lines are cuts, thin lines are Stokes curves, thick arrows are cycles $g_i$, and straight grey lines are the paths $\gamma_i$.}
\label{fig:StokesGraph1Mon}
\end{figure}

For the Type 1 graph in Figure \ref{fig:StokesGraph1Mon}, the monodromies $g_1$, $g_2$ and $g_3$ are
\begin{multline}
    g_1=M_{24}M_{45}M_{56}M_{64}M_{42}B_1\\
    =\begin{pmatrix}1&-iw_2\\0&1\end{pmatrix}
    \begin{pmatrix}1&0\\-i/w_2&1\end{pmatrix}
    \begin{pmatrix}1&0\\-iw_1/w_2^2&1\end{pmatrix}
    \begin{pmatrix}1&0\\-i/(w_4n_3^2n_1^2)&1\end{pmatrix}
    \begin{pmatrix}1&i n_1^2w_2\\0&1\end{pmatrix}
    \begin{pmatrix}n_1&0\\0&n_1^{-1}\end{pmatrix},
\end{multline}
\begin{equation}
    g_2=M_{21}M_{15}M_{52}B_2
    =\begin{pmatrix}1&0\\-i/w_3&1\end{pmatrix}
    \begin{pmatrix}1&0\\-i/(w_1n_2^2)&1\end{pmatrix}
    \begin{pmatrix}1&0\\-i/(w_2n_2^2)&1\end{pmatrix}
    \begin{pmatrix}n_2&0\\0&n_2^{-1}\end{pmatrix}.
    \label{eq:g2-corrected}
\end{equation}
\begin{equation}
    g_3=M_{24}M_{43}M_{32}B_3
    =\begin{pmatrix}1&-iw_2\\0&1\end{pmatrix}
    \begin{pmatrix}1&-iw_4n_3^2\\0&1\end{pmatrix}
    \begin{pmatrix}1&-iw_3n_3^2\\0&1\end{pmatrix}
    \begin{pmatrix}n_3&0\\0&n_3^{-1}\end{pmatrix}.
\end{equation}

With $w_{12}=w_2/w_1$ and $w_{13}=w_3/w_1$, the trace functions are
\begin{equation}
\label{eq:ABCformulas}
\begin{split}
    A&=\Tr(g_1g_2)
    =-\frac{{n_2} {w_{12}}}{{n_1} {w_{13}}}-\frac{{n_1}{n_2}}{{w_{12}}}
    -\frac{{w_{12}}}{{n_1}{n_2}}-\frac{{n_1}{n_2}}{{w_{13}}}
    -\frac{{n_2}^2 {n_4} {w_{12}}}{{n_3}{w_{13}}^2}
    -\frac{{n_2}^2 {n_4}}{{n_3}{w_{13}}}
    -\frac{{n_4}{w_{12}}}{{n_3}{w_{13}}},\\
    B&=\Tr(g_2g_3)
    =-\frac{w_{13}w_{12}}{n_1n_2^2n_4}-\frac{w_{12}}{n_2n_3}
    -\frac{w_{12}}{n_1n_4}-\frac{n_2w_{12}}{n_3w_{13}}
    -\frac{n_3w_{13}}{n_2}-\frac{w_{13}}{n_1n_2^2n_4}
    -\frac{n_3w_{13}}{n_2w_{12}},\\
    C&=\Tr(g_1g_3)
    =-\frac{n_1n_3w_{13}}{w_{12}}-\frac{n_1n_3w_{13}}{w_{12}^2}
    -\frac{n_1n_3}{w_{12}}-\frac{w_{13}}{n_2n_4}
    -\frac{w_{13}}{n_2n_4w_{12}}-\frac{n_2n_4}{w_{12}}
    -\frac{n_2n_4}{w_{13}}.
\end{split}
\end{equation}
These formulas should be understood as the trace expressions in this chosen Stokes chamber and with the chosen branch/cut conventions. Across a saddle-connection wall, the variables \(w_{ij}\) transform by the corresponding Stokes/cluster transformation, so the tropical leading term of \(A,B,C\) may change even though the monodromy representation itself is analytically continued.

The monodromy around a contour enclosing all poles and cuts is trivial, which gives the relation
\begin{equation}
    n_1n_2n_3n_4\,w_1w_4w_2^{-1}w_3^{-1}=1.
\end{equation}
Thus any trace of monodromy around a closed curve can be expressed in terms of $w_{12}$, $w_{13}$ and $n_1,n_2,n_3,n_4$. Each of these variables is the exponential of an integral of the one-form
\begin{equation}
    \kappa=S_{\mathrm{odd}}(z,\hbar)\,\dd z
\end{equation}
around a path or cycle on the spectral curve $\Sigma_U$. In this sense, the exact WKB machinery maps Wilson loops of the Schrödinger connection on $\mathcal C$ to a linear combination of periods of the differential $\lambda=p(z)\dd z$ on the spectral curve $\Sigma$ that is a double cover of $\mathcal C$. In the limit $\hbar\to0$, these traces are controlled to leading order by one of the periods of the Seiberg-Witten differential $\lambda$.
\section{Tropical asymptotics and the Seiberg--Witten period map}
\label{sec:tropical}

We now state the chamberwise semiclassical result used below. Fix a Stokes
chamber and a ray
\[
    \hbar=\rho e^{i\vartheta},\qquad \rho\to0^+,
\]
and let
\[
    \Sigma_U:\quad p^2=T(z;U),\qquad \lambda=p\,dz
\]
be the limiting Seiberg--Witten curve and differential. The small loops around
the punctures give the flavor periods
\[
    \oint_{\delta_i}\lambda=2\pi i\,\mathfrak m_i,
\]
up to the sign determined by the sheet of the double cover.

The growth of a period \(\Pi\) along the chosen ray is governed by
\[
    \operatorname{val}_\vartheta(\Pi)
    :=
    \Re(e^{-i\vartheta}\Pi),
\]
since
\[
    \Re(\Pi/\hbar)=\rho^{-1}\operatorname{val}_\vartheta(\Pi).
\]
Thus, in every finite sum of WKB exponentials, the leading term is the one with
largest \(\vartheta\)-valuation.

\begin{definition}[Integral-period and non-integral-period chambers]
Fix the branch cuts, a branch of \(p=\sqrt T\), the paths defining the Voros
exponentials, and the branches of the logarithms and square roots entering the
NRS coordinate chart. A tropical subchamber is a connected region where the
dominant exponential terms in the relevant trace formulae and NRS relations are
fixed.

A tropical subchamber is called an integral-period chamber if the dominant terms
are unique, their leading coefficients do not cancel, and the leading NRS
logarithms are genuine Seiberg--Witten periods:
\[
    \alpha_{\mathrm{NRS}}
    =
    \frac{1}{\hbar}\int_{\gamma_\alpha}\lambda+O(1),
    \qquad
    \beta_{\mathrm{NRS}}
    =
    \frac{1}{\hbar}\int_{\gamma_\beta}\lambda+O(1),
\]
where \(\gamma_\alpha,\gamma_\beta\in H_1(\Sigma;\mathbb Z)\) form a primitive
symplectic pair,
\[
    \langle\gamma_\alpha,\gamma_\beta\rangle=\pm1.
\]
Equivalently, after quotienting by flavor-period shifts, the integer linear
system obtained by tropicalizing the NRS relations is non-degenerate and
unimodular.

A tropical subchamber is called non-integral-period if the dominant terms are
well defined and non-cancelling, but the resulting leading NRS logarithms do not
define such an integral symplectic period pair. This includes rational
combinations of periods, non-primitive cycles, degenerate systems, and pairs of
cycles whose intersection is not \(\pm1\). Walls where dominance is not unique,
or where leading coefficients cancel, are boundaries of the tropical
decomposition rather than chambers.
\end{definition}

\begin{proposition}[Tropical NRS leading system]
\label{prop:tropical-leading-system}
Fix a generic Stokes chamber, branch choices, and a ray of \(\hbar\). In a
tropical subchamber where all relevant trace and NRS relations have unique
dominant monomials and no leading cancellation, the leading logarithms
\[
    \alpha_{\mathrm{NRS}}=\frac{\alpha_0}{\hbar}+O(1),
    \qquad
    \beta_{\mathrm{NRS}}=\frac{\beta_0}{\hbar}+O(1)
\]
are determined by an integer affine linear system whose coefficients are read
off from the dominant monomials. The inhomogeneous terms are periods of
\(\lambda\), including possible flavor periods. If this system is unimodular
modulo flavor-period shifts and the resulting cycles form a primitive
symplectic pair, then the subchamber is an integral-period chamber. Otherwise
the same procedure gives a tropical asymptotic, but not an integral
Seiberg--Witten electric-magnetic basis in the chosen NRS chart.
\end{proposition}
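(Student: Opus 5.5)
The plan is to tropicalize each relation in turn, with valuations taken along the fixed ray $\hbar=\rho e^{i\vartheta}$. In the fixed Stokes chamber, the exact-WKB computation of the previous section writes $A,B,C$ as finite Laurent sums in the Voros exponentials $w_{12},w_{13},n_1,\dots,n_4$. An example is the Type 1 formulas \eqref{eq:ABCformulas}; the other types are treated in Appendix~\ref{sec:Monodromies}. Each monomial has the form $c\,\exp\bigl(\hbar^{-1}\Pi+O(1)\bigr)$, where $\Pi$ is an integer combination of periods of $\lambda$ along the paths $\gamma_k$ and the flavor cycles $\delta_i$. Here we use $\int S_{\mathrm{odd}}\,dz=\hbar^{-1}\int\lambda+O(\hbar)$ together with Borel summability in the chosen sector. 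The first step is to record that, in the subchamber where the monomial $m_X$ of maximal $\operatorname{val}_\vartheta$ is unique and its coefficient does not vanish, we have $\log X=\hbar^{-1}\Pi_X+O(1)$ for $X\in\{A,B,C\}$. The $O(1)$ error comes from the coefficient and from the exponentially subleading remainder.

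Next, the same reasoning is applied to the NRS relations \eqref{eq:alphabetadefinition}. We write $\alpha=\hbar^{-1}\alpha_0+O(1)$ and $\beta=\hbar^{-1}\beta_0+O(1)$, fixing the branches of the logarithms so that $\Re(e^{-i\vartheta}\alpha_0)>0$; this uses the symmetry $\alpha\mapsto-\alpha$, and the same choice is made for $\beta$. Under this choice $e^{\alpha}$ dominates in $A=e^\alpha+e^{-\alpha}$, and hence $\alpha_0=\Pi_A$. The quantities $c_{ij}(A)$ are polynomials in $A$ and the $n_i^{\pm1}$, so their leading monomials are likewise products of known monomials, and $\sqrt{c_{12}c_{34}}$ has half-integral leading exponent. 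The second relation, and the third relation if needed, then give an equation of the form $\varepsilon_\beta\beta_0+\text{(lead of }\sqrt{c_{12}c_{34}})=\text{lead of LHS}$. Collecting these yields a system $N\binom{\alpha_0}{\beta_0}=\binom{\Pi}{\Pi'}$. The matrix $N$ has integer entries (after clearing the square root by squaring), and the right-hand side consists of integer combinations of periods of $\lambda$, possibly including the flavor periods $2\pi i\mathfrak m_i$. This is exactly the integer affine system asserted in the statement. The last two clauses of the proposition then follow directly from the Definition. If $N$ is unimodular modulo the flavor lattice, we solve for $\alpha_0,\beta_0$ as integral cycle combinations $\int_{\gamma_\alpha}\lambda,\int_{\gamma_\beta}\lambda$, and primitivity with $\langle\gamma_\alpha,\gamma_\beta\rangle=\pm1$ is precisely the integral-period condition. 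In every other case the solution still exists, possibly over $\mathbb Q$ or not uniquely, and this is the non-integral-period alternative.

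I expect the main obstacle to be the consistency of this dominance bookkeeping. Three issues arise here. First, the left-hand side of the second NRS relation is the combination $B(A^2-4)+\dots$, and in it the leading monomials of the products may cancel, for instance between $BA^2$ and the other terms, even when $A$ and $B$ separately have unique leads. The hypothesis ``no leading cancellation in the NRS relations'' is used precisely here, so it must be imposed on the composite expressions and not just on $A,B,C$. Second, the branch choices for $\beta$ and for the square root have to be compatible with the third relation. I would handle this by checking that the third relation's leading term gives the same $\beta_0$ up to sign, which fixes $\varepsilon_\beta$. Third, I need to justify that the $O(1)$ corrections are uniform on compact subsets of the subchamber. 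This follows because the subleading monomials are exponentially suppressed by a gap in $\operatorname{val}_\vartheta$ that is bounded below there.
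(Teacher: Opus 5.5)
\textbf{Gap: the squaring step.} Your overall route matches the paper's. Its proof likewise reads off $\log A,\log B,\log C$ from the unique dominant Voros monomials, substitutes them into the NRS relations, equates leading exponents to get an integer affine system, and treats the final dichotomy as a restatement of the Definition. Your added care about cancellation in the composite left-hand sides and about uniformity of the $O(1)$ term is welcome.

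However, one step, as you set it up, breaks the last clause. You make $N$ integral by squaring the second NRS relation to remove $\sqrt{c_{12}c_{34}}$. Squaring doubles that whole row, so the coefficient of $\beta_0$ becomes $2\varepsilon_\beta$. The first row is $(1,0)$, from the relation defining $\alpha$, so you get $\det N=\pm2$ in \emph{every} subchamber. Your unimodularity test can then never succeed, and your ``every other case'' branch would label every subchamber non-integral-period.

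That is already wrong in the paper's weak-coupling chamber. There the leading balance $AB^2\approx B^2e^{\beta}$ (the paper's $\sqrt{c_{23}(B)c_{14}(B)}\sim B^2$) gives $\varepsilon_\beta\beta_0=\Pi_A$, an integral period dual to $\alpha_0=\Pi_B$. Your squared row $2\varepsilon_\beta\beta_0=2\Pi_A$ instead yields a determinant-$2$ system.

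\textbf{The fix.} Do not square. Keep the coefficient $\varepsilon_\beta=\pm1$ and let the square root contribute $\tfrac12\Pi_c$ to the inhomogeneous term, where $\Pi_c$ is the leading exponent of $c_{12}c_{34}$. If you substitute $A=e^{\alpha}+e^{-\alpha}$ inside $c_{12}$ and $c_{34}$, this instead appears as a possibly half-integral coefficient of $\alpha_0$. The integrality question then lives in the solved right-hand side: is $\Pi_L-\tfrac12\Pi_c$, modulo flavor periods, an integral period forming a primitive pair with $\Pi_A$? The case where $\Pi_c$ is not divisible by $2$ is exactly the paper's first non-integral scenario, and your squared system cannot distinguish it from the integral case. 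The paper's proof glosses over this point by absorbing the square-root contribution into ``known period data''.

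\textbf{A minor related point.} You cannot normalize the sign of $\beta_0$ independently of that of $\alpha_0$. Only the joint flip $(\alpha,\beta)\mapsto(-\alpha,-\beta)$ preserves the third relation. Once $\alpha_0$ is fixed, the sign of $\beta_0$ is forced by the ratio of the leading coefficients of the second and third relations. This affects only signs, so your $\varepsilon_\beta$ absorbs it.
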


\begin{proof}
By the WKB expansion, every regularized Voros integral appearing in the
monodromy computation has leading term
\[
    \int_\eta S_{\mathrm{odd}}(z,\hbar)\,dz
    =
    \frac{1}{\hbar}\int_\eta\lambda+O(1).
\]
The trace formulae obtained from Stokes matrices are finite Laurent sums in
Voros variables and in the local monodromy variables
\[
    n_i=e^{2\pi i\nu_i}
       =
    \exp\left(\frac{2\pi i\,\mathfrak m_i}{\hbar}\right).
\]
Therefore every monomial appearing in a trace coordinate has leading exponent
given by a period of \(\lambda\), including possible flavor-period contributions
from loops around punctures.

Inside a tropical subchamber, suppose a unique monomial dominates a trace
coordinate. Then its logarithm is the corresponding period divided by \(\hbar\),
up to \(O(1)\). Applying this to
\[
    A=\Tr(g_1g_2),\qquad
    B=\Tr(g_2g_3),\qquad
    C=\Tr(g_1g_3),
\]
we obtain
\[
    \log A=\frac{\Pi_A}{\hbar}+O(1),\qquad
    \log B=\frac{\Pi_B}{\hbar}+O(1),\qquad
    \log C=\frac{\Pi_C}{\hbar}+O(1),
\]
where \(\Pi_A,\Pi_B,\Pi_C\) are periods of \(\lambda\), with possible flavor
shifts.

Substituting these asymptotics into the NRS coordinate relations gives a finite
collection of exponential terms whose leading exponents are integer-linear
functions of the unknowns \(\alpha_0\) and \(\beta_0\), plus known period data.
Dominance of the selected terms in the NRS equations gives equalities of these
leading exponents, hence an integer affine linear system. The final assertion
is exactly the unimodularity and intersection-pairing condition stated above.
\end{proof}

\paragraph{When NRS does not reproduce SW}
Note that since all of the pairwise monodromies $A$, $B$, $C$ have at least two terms that are inverse of one another. This implies that in every chamber $|A|, |B|, |C|\to \infty$ as $\hbar\to 0$. Therefore, the leading asymptotic of $\alpha_{NRS}$ is fixed by the dominant period of $A$:
\begin{equation}
    \alpha_{NRS} \sim \log A \sim \frac{\Pi_A}{\hbar} 
\end{equation} 

As far as the second and third equations \eqref{eq:alphabetadefinition} defining $\alpha_{NRS}$ and $\beta_{NRS}$ go, the situation is more complicated. First of all, the dominating term under the square root need not be divisible by 2 as an element of $H^1(\Sigma, \mathbb{Z})$. Therefore, the square root need not correspond to a cycle of the SW curve. This is the first case when we are not in an integral-period chamber.

The second non-integral scenario is when, after all cancellations on the left side and division by the square root, the cycle $\gamma_{\beta}$ for the leading term doesn't have intersection pairing $\langle \gamma_{\alpha} , \gamma_\beta\rangle = \pm 1$. In that case, the symplectic form $\Omega = \dd \alpha_{NRS} \wedge \dd \beta_{NRS}$ doesn't reproduce $\dd a_{SW}\wedge \dd b_{SW}$. 

We interpret this as a coordinate-dependent limitation of the chosen pants
decomposition and NRS chart. It remains an open question whether, after
changing the pants decomposition or Darboux chart, every such chamber admits
an integral-period description.

\paragraph{Tropical walls}
The walls are where two competing monomials have equal \(\vartheta\)-valuation.
Equivalently, for some period difference \(\int_\gamma\lambda\),
\[
    \Re\left(\frac{1}{\hbar}\int_\gamma\lambda\right)=0.
\]
These include the usual Stokes walls where the Stokes graph changes, but also
walls where the Stokes graph is fixed while the dominant term in a trace or NRS
relation changes.

\paragraph{Consequence for the IR coupling.}
Inside a fixed integral-period chamber, assume that
\(\gamma_\alpha,\gamma_\beta\) are locally constant as \(U\) varies. Then
\[
    \frac{\partial\beta_{\mathrm{NRS}}}
         {\partial\alpha_{\mathrm{NRS}}}
    =
    \frac{
        \partial_U\int_{\gamma_\beta}\lambda
    }{
        \partial_U\int_{\gamma_\alpha}\lambda
    }
    +O(\hbar).
\]
For a rank-one family
\[
    T(z;U)
    =
    \frac{P(z)}{\prod_{i=1}^4(z-z_i)^2}
    +
    U\prod_{i=1}^4\frac{1}{z-z_i},
\]
the \(U\)-derivative of a period is, up to an overall constant, a period of the
holomorphic differential on the elliptic curve. Hence the chamberwise IR
coupling is
\[
    \tau_{\mathrm{IR}}
    =
    \frac{
        \int_{\gamma_\beta}
        dz/\sqrt{\prod_{j=1}^4(z-u_j)}
    }{
        \int_{\gamma_\alpha}
        dz/\sqrt{\prod_{j=1}^4(z-u_j)}
    }.
\]

\paragraph{A sample integral-period chamber.}
Consider the quadratic differential
\[
\begin{aligned}
T(z)=&
\frac{1}{(z-1-i)^2}
+\frac{1}{(z-1+i)^2}
+\frac{1}{(z+1-i)^2}
+\frac{1}{(z+1+2i)^2}
\\
&+\frac{0.5}{z-1-i}
+\frac{0.75+1.25i}{z+1-i}
+\frac{-1.25-1.25i}{z-1+i}.
\end{aligned}
\]
This corresponds to
\(q=11/26+3i/26\) and
\(\mathfrak m_1=\mathfrak m_2=\mathfrak m_3=\mathfrak m_4=1\). For
\(\arg\hbar=0\), the Stokes graph in Figure~\ref{fig:example_plot} is of
Type~1.

\begin{figure}[h]
\centering
\includegraphics[width=0.5\textwidth]{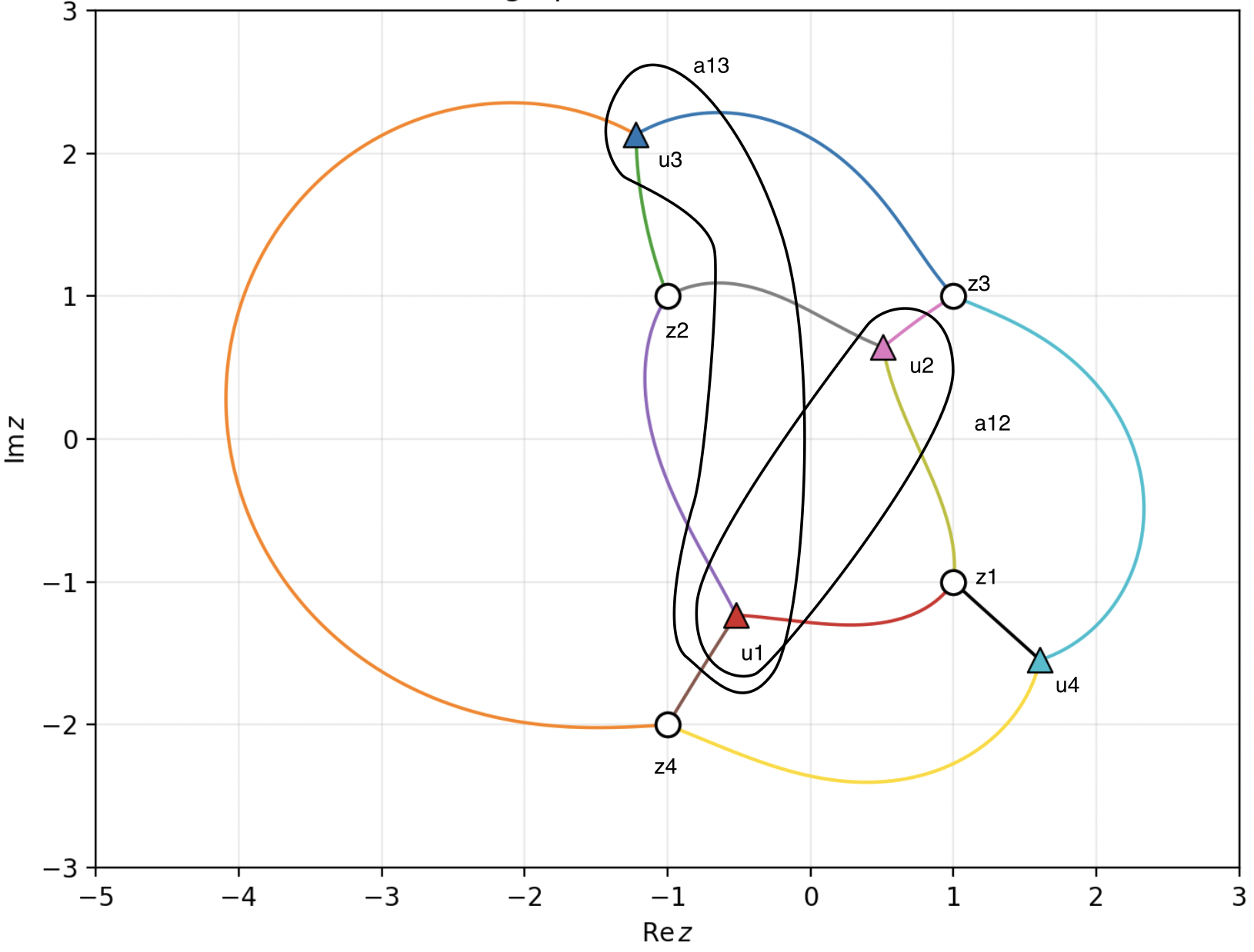}
\caption{Stokes graph for the example, with the cycles used to compute
\(a_{12}\) and \(a_{13}\) marked.}
\label{fig:example_plot}
\end{figure}

Using the Type~1 monodromy formulae, the dominant terms in this chamber select
specific periods for \(A,B,C\). After the relabeling
\((A,B,C)\mapsto(C,A,B)\), the NRS relations give
\[
    \alpha_{\mathrm{NRS}}
    =
    \frac{1}{\hbar}\int_{\gamma_\alpha}\lambda+O(1),
    \qquad
    \beta_{\mathrm{NRS}}
    =
    \frac{1}{\hbar}\int_{\gamma_\beta}\lambda+O(1),
\]
where the cycles are shown in Figure~\ref{fig:example_plot_AB}. Since these
cycles form a primitive integral pair, this is an integral-period chamber.

\begin{figure}[h]
\centering
\includegraphics[width=0.5\textwidth]{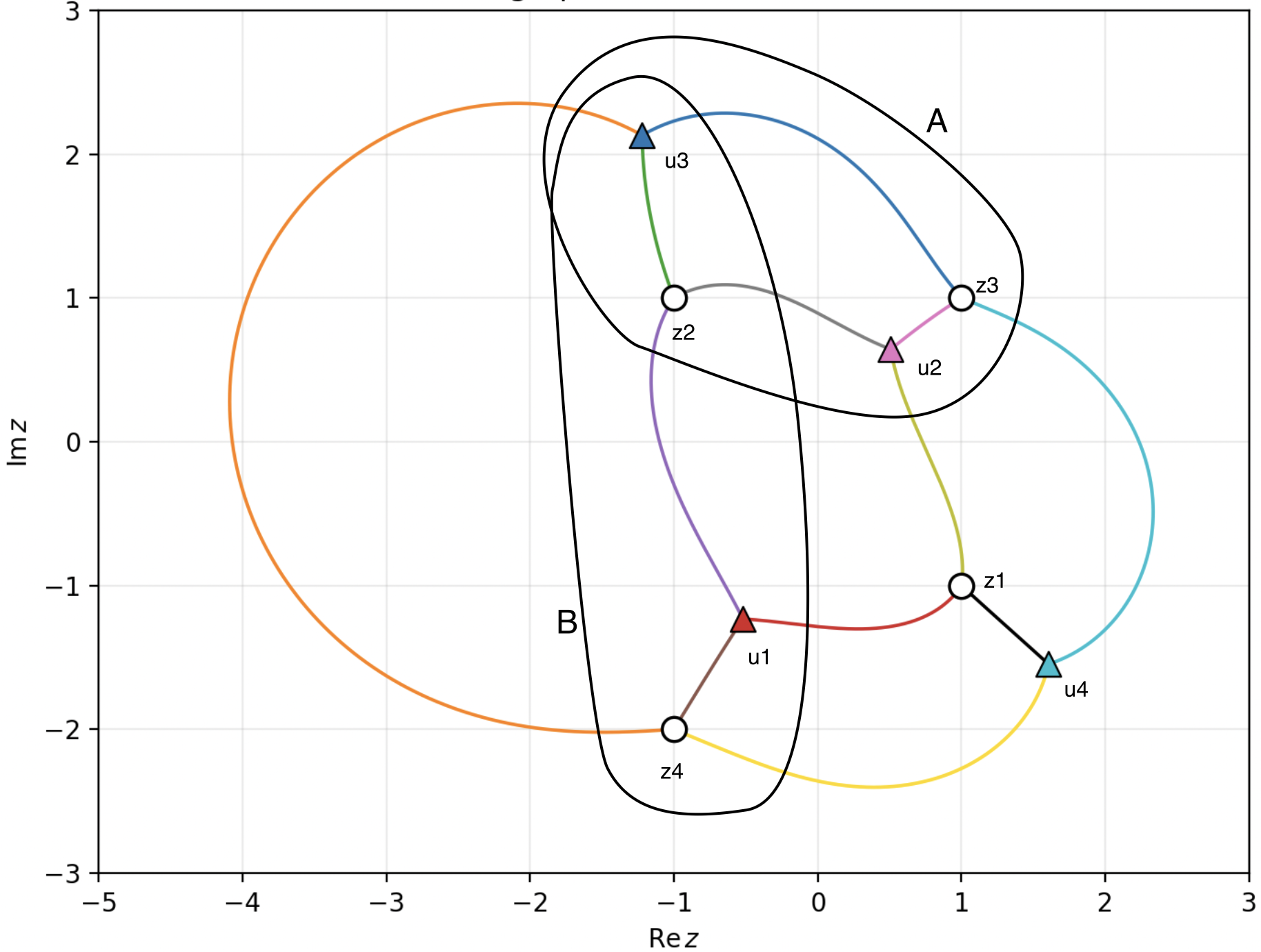}
\caption{Cycles corresponding to the leading asymptotics of \(A\) and \(B\),
equivalently of \(\alpha_{\mathrm{NRS}}\) and \(\beta_{\mathrm{NRS}}\).}
\label{fig:example_plot_AB}
\end{figure}

\section{Weak coupling limit}

It is difficult in general to decide whether a given tropical chamber is
integral-period: one must know whether the dominant terms in the NRS relations
select cycles with intersection pairing \(\pm1\). The weak-coupling limit is a
useful exception, because the relevant monodromy can be read directly from the
plumbing degeneration.

Normalize the punctures as
\[
    (z_1,z_2,z_3,z_4)=(0,q,1,\infty),
    \qquad q\to0.
\]
Then the four-punctured sphere degenerates into two three-punctured spheres
joined by a long neck. The punctures \(0\) and \(q\) lie on the inner sphere,
while \(1\) and \(\infty\) lie on the outer sphere. The neck carries the
internal residue \(a\), i.e. the electric Seiberg--Witten period.

We use the parametrization
\begin{equation}
\begin{aligned}
T(z;q)=&
\frac{\mathfrak m_1^2}{z^2}
+\frac{\mathfrak m_2^2}{(z-q)^2}
+\frac{\mathfrak m_3^2}{(z-1)^2}
+\frac{\mathfrak m_4^2-\mathfrak m_1^2-\mathfrak m_2^2-\mathfrak m_3^2}
     {z(z-1)} \\
&+
\frac{q(q-1)U}{z(z-q)(z-1)} .
\end{aligned}
\label{eq:weak-coupling-T}
\end{equation}
Keeping the internal residue fixed requires
\begin{equation}
    qU\longrightarrow a^2-\mathfrak m_1^2-\mathfrak m_2^2 .
    \label{eq:qU-a-relation}
\end{equation}
Thus \(U\) diverges like \(q^{-1}\) in the weak-coupling limit.

On the inner sphere, set \(z=qx\). Then
\[
    T(qx;q)q^2dx^2
    \longrightarrow
    \phi_{\mathrm{in}}(x)\,dx^2,
\]
where
\begin{equation}
\phi_{\mathrm{in}}(x)
=
\frac{\mathfrak m_1^2}{x^2}
+
\frac{\mathfrak m_2^2}{(x-1)^2}
+
\frac{a^2-\mathfrak m_1^2-\mathfrak m_2^2}{x(x-1)} .
\label{eq:phi-inner}
\end{equation}
This is the three-punctured differential with residues
\((\mathfrak m_1,\mathfrak m_2,a)\) at \(x=0,1,\infty\).

On the outer sphere, keeping \(z\) fixed gives
\[
    T(z;q)\,dz^2
    \longrightarrow
    \phi_{\mathrm{out}}(z)\,dz^2,
\]
where
\begin{equation}
\phi_{\mathrm{out}}(z)
=
\frac{a^2}{z^2}
+
\frac{\mathfrak m_3^2}{(z-1)^2}
+
\frac{\mathfrak m_4^2-a^2-\mathfrak m_3^2}{z(z-1)} .
\label{eq:phi-outer}
\end{equation}
This is the three-punctured differential with residues
\((a,\mathfrak m_3,\mathfrak m_4)\) at \(z=0,1,\infty\).

The two components are glued through the annulus
\[
    q\ll |z|\ll1.
\]
In this neck region,
\[
    T(z;q)\,dz^2
    =
    a^2\frac{dz^2}{z^2}
    +
    \text{subleading terms},
\]
so the Seiberg--Witten differential satisfies
\[
    \lambda\sim a\frac{dz}{z}.
\]

\paragraph{Quasiclassical behavior.}
The trace around the neck is, to leading order, the monodromy around the third
puncture of the inner three-punctured sphere:
\begin{equation}
    A\sim 2\cos\left(\frac{2\pi a}{\hbar}\right)+O(q).
\end{equation}
Equivalently, the dominant exponential in \(A\) has leading logarithm
\[
    \frac{1}{\hbar}\oint_A\lambda
    =
    \frac{2\pi i\,a}{\hbar}.
\]
Thus the tropical term selected by \(A\) is the neck cycle or its inverse, whichever dominates in a particular chamber.

The dual period of the Seiberg-Witten curve behaves as
\[
    w_{13}
    \sim
    \exp\left(\frac{a\log q}{\hbar}\right)+O(1).
\]
For sufficiently small \(q\), either \(w_{13}\) or \(w_{13}^{-1}\) dominates
the relevant tropical sums, depending on the ray of \(\hbar\). Note that the expressions for $A$ in the Appendix have either only non-negative or non-positive powers of $w_{13}$. The only exception is Type 6 of the Stokes graph, for which there is no way to split it into inner and outer regions with poles $z_1, z_2$ and zeroes $u_1, u_2$ ending up in the inner region, and the rest being in the outer region. Such splitting would include a Stokes line going into the other region, winding a pole there, and coming back, which is prohibited in the $q\to0$ limit. On the other hand, the expressions for $B$ in types 1-5 have terms with the opposite dependence on $w_{13}$, and the power of such terms is always $\pm 1$. 

It is helpful to switch $A$ and $B$ at this stage to obtain an explicitly integral-period chamber. This also involves exchanging $\mu_1$ and $\mu_3$ in the formulas. Using the modified definition 
\begin{equation}
\label{eq:alphabetadefinitionmod}
\begin{split}
    B&=e^{\alpha_{NRS}}+e^{-\alpha{NRS}},\\
    A(B^2-4)+2(\mu_1\mu_3+\mu_2\mu_4)-A(\mu_1\mu_2+\mu_3\mu_4)
    &=\sqrt{c_{23}(B)c_{14}(B)}(e^{\beta_{NRS}}+e^{-\beta_{NRS}}),
\end{split}
\end{equation}
we observe that in the leading order in $\log q$, these equations simplify to
\begin{equation}
    B=e^{\alpha_{NRS}}+e^{-\alpha_{NRS}}, \quad A=(e^{\beta_{NRS}}+e^{-\beta_{NRS}}),
\end{equation}
therefore explicitly giving the SW coordinates
\begin{equation}
    \alpha_{NRS} \sim \frac{1}{\hbar}\Pi_B, \quad \beta_{NRS}\sim \frac{1}{\hbar} \Pi_A
\end{equation}
with intersection pairing $\pm 1$. Therefore, in the plumbing-compatible NRS chart and away from the chamber walls, in the weak-coupling limit NRS coordinates recover the SW electromagnetic dual pair, up to flavor period shifts. 

\section{Conclusion}
In this paper we studied the semiclassical limit of the oper
Lagrangian inside the character variety of flat \(SL_2(\mathbb C)\)
connections on the four-punctured sphere. Using exact WKB monodromy
formulae, we expressed the trace coordinates \(A,B,C\) in terms of
Voros-type exponentials and then translated their asymptotics into the
Nekrasov--Rosly--Shatashvili Darboux coordinates \((\alpha,\beta)\).

The main outcome is a chamberwise tropical asymptotic calculation, not a global
coordinate-independent recovery theorem. The exact-WKB trace formulae give
period-valued Laurent sums, and the NRS relations tropicalize to an integer
affine linear system for the leading logarithms of \((\alpha,\beta)\). In
verified integral-period chambers, including the weak-coupling chamber above,
the resulting leading asymptotics of the NRS coordinates are given by a
primitive integral symplectic pair of periods of the Seiberg--Witten
differential
\[
    \lambda=p\,dz,
    \qquad
    p^2=T(z).
\]
In non-integral-period chambers, the WKB tropical asymptotics can still exist,
but the chosen NRS coordinate chart does not produce an integral
electric-magnetic period basis. Those chambers should be interpreted as
coordinate-dependent limitations of this chart rather than as a contradiction
of the Seiberg--Witten limit of the oper variety.

One case where we find it possible to always provide an integral-chamber description is the weak-coupling limit. In that limit, we can always adapt the NRS coordinates in a way that in the limit $q\to 0$, $\hbar \to 0$ we obtain SW periods as the leading asymptotic of $\alpha_{NRS}$ and $\beta_{NRS}$, as long as one stays within the tropical chamber.

The relation with the GMN construction is close but not identical. GMN use WKB
triangulations to construct Fock--Goncharov coordinates and study their
wall-crossing. Here the starting point is instead the exact WKB analysis of
\(SL_2\)-opers and the resulting monodromy traces, which are then expressed in
NRS coordinates adapted to a pants decomposition. This gives a complementary
route from oper monodromy data to Seiberg--Witten period geometry, but only
after identifying which tropical chambers are integral-period chambers.

Several extensions remain open. First, one should give an intrinsic criterion
for deciding whether a chamber is integral-period without explicitly tracing
Stokes lines and comparing dominant exponentials. Second, one should describe
how non-integral-period chambers are related to integral-period chambers by
cluster, Fenchel--Nielsen, or other changes of Darboux coordinates. Third, one
should compare the obtained piecewise-linear asymptotics more directly with
wall-crossing and related integrable-system structures. These questions are
necessary for turning the chamberwise calculation into a coordinate-independent
description of the semiclassical oper variety.

\appendix

\section{Monodromies for different Stokes graphs}

In this appendix, the WKB expressions for the monodromies $g_1$, $g_2$ and $g_3$ are given and the coordinates $A,B, C$ are calculated. There are two consistency checks. First, all of the trace formulae satisfy the Fricke relation. Second, the trace formulae depend nontrivially on the torus-cycle variables $w_{12}$ and $w_{13}$. This check is weaker than the integral-period condition: even when the trace formulae involve torus cycles, a particular tropical chamber may still be non-integral-period because the induced NRS asymptotics can be rational, non-primitive, degenerate, or dominated by flavor-period shifts after solving the NRS relations.

\label{sec:Monodromies}
\subsection{Type 1}

\begin{figure}[h]
\centering
\includegraphics[height=0.5\textwidth]{StokesGraph1Mon.pdf}
\caption{Stokes graph of Type 1.}
\end{figure}

\begin{equation}
    g_1 = \begin{pmatrix}
        1 & -i w_2 \\
        0 & 1
    \end{pmatrix}\cdot \begin{pmatrix}
        1 & 0 \\ -i \frac{1}{w_2} & 1
    \end{pmatrix} \cdot \begin{pmatrix}
        1 & 0 \\ -i \frac{w_1}{w_2^2} & 1
    \end{pmatrix} \cdot  \begin{pmatrix}
        1 & 0 \\ -i \frac{1}{w_4n_3^2n_1^2} & 1
    \end{pmatrix}\begin{pmatrix}
        1 & i n_1^2 w_2 \\
        0 & 1
    \end{pmatrix}\cdot \begin{pmatrix}
        n_1 & 0 \\ 0 & \frac{1}{n_1}
    \end{pmatrix}
\end{equation}

\begin{equation}
    g_2 = \begin{pmatrix}
        1 & 0 \\ 
        -i\frac{1}{w_3} & 1
    \end{pmatrix} \cdot \begin{pmatrix}
        1 & 0 \\ 
        -i\frac{1}{w_1n_2^2} & 1
    \end{pmatrix} \cdot \begin{pmatrix}
        1 & 0 \\ 
        -i\frac{1}{w_2n_2^2} & 1
    \end{pmatrix} \cdot \begin{pmatrix}
        n_2 & 0 \\ 
        0 & \frac{1}{n_2}
    \end{pmatrix}
\end{equation}

\begin{equation}
    g_3 = \begin{pmatrix}
        1 & -iw_2 \\
        0 & 1
    \end{pmatrix} \cdot  \begin{pmatrix}
        1 & -i w_4 n_3^2 \\
        0 & 1
    \end{pmatrix} \cdot \begin{pmatrix}
        1 & -i w_3 n_3^2 \\
        0 & 1
    \end{pmatrix} \cdot \begin{pmatrix}
        n_3 & 0 \\
        0 & \frac{1}{n_3}
    \end{pmatrix}
\end{equation}

\begin{equation}
    A = \Tr(g_1g_2) = -\frac{{n_2} {w_{12}}}{{n_1} {w_{13}}}-\frac{{n_1}
   {n_2}}{{w_{12}}}-\frac{{w_{12}}}{{n_1} {n_2}}-\frac{{n_1}
   {n_2}}{{w_{13}}} -\frac{{n_2}^2 {n_4} {w_{12}}}{{n_3}
   {w_{13}}^2}-\frac{{n_2}^2 {n_4}}{{n_3} {w_{13}}}-\frac{{n_4}
   {w_{12}}}{{n_3} {w_{13}}}
\end{equation}

\begin{equation}
    B = \Tr(g_2g_3)=-\frac{w_{13} w_{12}}{n_1 n_2^2 n_4}-\frac{w_{12}}{n_2 n_3}-\frac{w_{12}}{n_1
   n_4}-\frac{n_2 w_{12}}{n_3 w_{13}}-\frac{n_3 w_{13}}{n_2}-\frac{w_{13}}{n_1 n_2^2
   n_4}-\frac{n_3 w_{13}}{n_2 w_{12}}
\end{equation}
\begin{equation}
    C = \Tr(g_1g_3) = -\frac{n_1 n_3 w_{13}}{w_{12}}-\frac{n_1 n_3 w_{13}}{w_{12}^2}-\frac{n_1
   n_3}{w_{12}}-\frac{w_{13}}{n_2 n_4}-\frac{w_{13}}{n_2 n_4 w_{12}}-\frac{n_2
   n_4}{w_{12}}-\frac{n_2 n_4}{w_{13}}
\end{equation}

\paragraph{Weak coupling.} $A$ has no leading dependence on the plumbing exponential \(w_{13}\), 
while \(B\) has leading \(w_{13}\)-dependence in the limit $q\to 0$.

\subsection{Type 2}
\begin{figure}[h]
\centering
\includegraphics[height=0.5\textwidth]{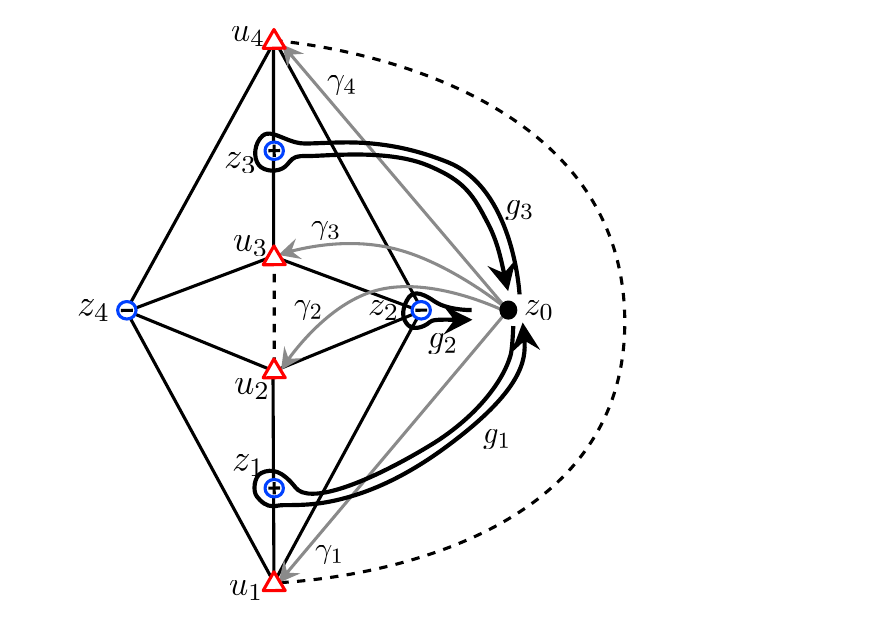}
\caption{Stokes graph of Type 2.}
\end{figure}

\begin{equation}
    g_1 = \begin{pmatrix}
        1 &0 \\
         i\frac{1}{w_1} & 1
    \end{pmatrix} \cdot  \begin{pmatrix}
        1 &  -i\frac{w_2}{n_2^2} \\
       0 & 1
    \end{pmatrix} \cdot \begin{pmatrix}
        1 & -i w_1 n_1^2 \\
        0 & 1
    \end{pmatrix} \cdot \begin{pmatrix}
        1 & 0 \\
        -i\frac{1}{w_1 n_1^2} & 1
    \end{pmatrix} \cdot \begin{pmatrix}
        n_1 & 0 \\
        0 & \frac{1}{n_1}
    \end{pmatrix}
\end{equation}
\begin{equation}
    g_2=\begin{pmatrix}
        1 & 0 \\
        -i \frac{1}{w_4} & 1
    \end{pmatrix} \cdot \begin{pmatrix}
        1 & 0 \\
        -i \frac{1}{w_3} & 1
    \end{pmatrix} \cdot \begin{pmatrix}
        1 & 0 \\
        -i \frac{1}{w_2} & 1
    \end{pmatrix} \cdot \begin{pmatrix}
        1 & 0 \\
        -i \frac{1}{w_1n_2^2} & 1
    \end{pmatrix} \cdot \begin{pmatrix}
        n_2 & 0 \\
        0 & \frac{1}{n_2}
    \end{pmatrix}
\end{equation}
\begin{equation}
    g_3 = \begin{pmatrix}
        1 & 0 \\
        -i \frac{1}{w_4} & 1
    \end{pmatrix} \cdot \begin{pmatrix}
        1 & -i{w_4} \\
        0 & 1
    \end{pmatrix} \cdot \begin{pmatrix}
        1 & -i{w_3}n_3^2 \\
        0 & 1
    \end{pmatrix} \cdot \begin{pmatrix}
        1 & 0 \\
        i\frac{1}{w_4n_3^2} & 1
    \end{pmatrix} \begin{pmatrix}
        n_3 & 0 \\
        0 & \frac{1}{n_3}
    \end{pmatrix}
\end{equation}

\begin{equation}
    A=-\frac{w_{12}^2}{n_1^2 n_2^2 n_3 n_4 w_{13}}-\frac{w_{12}}{n_1 n_2}-\frac{w_{12}}{n_1 n_2
   w_{13}}-\frac{w_{12}}{n_3 n_4 w_{13}}-\frac{n_1 n_2}{w_{13}}-\frac{n_1 n_2}{w_{12}}
\end{equation}
\begin{equation}
    B = -\frac{n_1 n_4 n_2^2 w_{13}}{w_{12}^2}-\frac{n_1 n_4 n_2^2}{w_{12}}-\frac{n_3 n_2
   w_{13}}{w_{12}}-\frac{n_1 n_4 w_{13}}{w_{12}}-\frac{n_3 w_{13}}{n_2}-\frac{w_{12}}{n_1
   n_4 n_2^2}
\end{equation}
\begin{multline}
    C = -\frac{w_{12}^3}{n_1^3 n_2^4 n_3 n_4^2 w_{13}}-\frac{2 w_{12}^2}{n_1^2 n_2^3
   n_4}-\frac{w_{12}^2}{n_1^2 n_2^3 n_4 w_{13}}-\frac{w_{12}^2}{n_1 n_2^2 n_3 n_4^2
   w_{13}}-\frac{n_3 w_{12}}{n_1 n_2^2}-\\-\frac{n_3 w_{13} w_{12}}{n_1
   n_2^2}-\frac{w_{12}}{n_1 n_2^2 n_3}-\frac{w_{12}}{n_1^2 n_2 n_4}-\frac{w_{12}}{n_2
   n_4}-\frac{w_{12}}{n_2 n_4 w_{13}}-\frac{n_3 w_{13}}{n_1}-\frac{n_4
   w_{13}}{n_2}-\frac{n_2 n_4 w_{13}}{w_{12}}
\end{multline}

\paragraph{Weak coupling.} $A$ has no leading dependence on the plumbing exponential \(w_{13}\), 
while \(B\) has leading \(w_{13}\)-dependence in the limit $q\to 0$.

\subsection{Type 3}
\begin{figure}[h]
\centering
\includegraphics[height=0.5\textwidth]{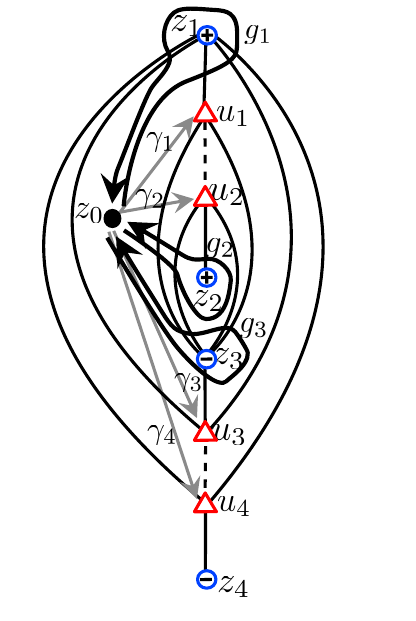}
\caption{Stokes graph of Type 3.}
\end{figure}

\begin{multline}
    g_1 = \begin{pmatrix}
        1 & -iw_1 \\ 
        0 & 1
    \end{pmatrix} \cdot\begin{pmatrix}
        1 & -i \frac{w_3w_1^2}{n_3^2 n_2^2w_2^2} \\ 
        0 & 1
    \end{pmatrix}\cdot\begin{pmatrix}
        1 & -i \frac{w_3^2w_1^2}{w_4w_2^2n_3^2n_2^2} \\ 
        0 & 1
    \end{pmatrix}\cdot \\ \cdot\begin{pmatrix}
        1 & -iw_4n_1^2 \\ 
        0 & 1
    \end{pmatrix}\cdot\begin{pmatrix}
        1 & -iw_3 n_1^2 \\ 
        0 & 1
    \end{pmatrix}\cdot\begin{pmatrix}
        n_1 & 0 \\ 
        0 & \frac{1}{n_1}
    \end{pmatrix}
\end{multline}

\begin{multline}
    g_2 = \begin{pmatrix}
        1 & 0\\
        i \frac{1}{w_1} & 1
    \end{pmatrix} \cdot \begin{pmatrix}
        1 & 0\\
        i \frac{1}{w_2} & 1
    \end{pmatrix} \cdot \begin{pmatrix}
        1 & -i w_2 n_2^2\\
        0 & 1
    \end{pmatrix} \cdot \begin{pmatrix}
        1 & 0\\
        -i \frac{1}{w_2 n_2^2} & 1
    \end{pmatrix} \cdot \begin{pmatrix}
        1 & 0\\
        i \frac{1}{w_1n_2^2} & 1
    \end{pmatrix} \cdot\begin{pmatrix}
        n_2 & 0 \\ 
        0 & \frac{1}{n_2}
    \end{pmatrix} 
\end{multline}

\begin{multline}
g_3 = \begin{pmatrix}
        1 & 0\\
        -i \frac{1}{w_3} & 1
    \end{pmatrix} \cdot \begin{pmatrix}
        1 & 0\\
        -i \frac{w_1}{w_2^2n_2^2n_3^2} & 1
    \end{pmatrix} \cdot \begin{pmatrix}
        1 & 0\\
        -i \frac{1}{w_2n_2^2n_3^2} & 1
    \end{pmatrix} \cdot \\ \cdot \begin{pmatrix}
        1 & 0\\
        -i \frac{1}{w_2n_3^2} & 1
    \end{pmatrix} \cdot \begin{pmatrix}
        1 & 0\\
        -i \frac{1}{w_1n_3^2} & 1
    \end{pmatrix} \cdot \begin{pmatrix}
        n_3 & 0 \\ 
        0 & \frac{1}{n_3}
    \end{pmatrix} 
\end{multline}

\begin{multline}
    A = -n_1 n_2 w_{12}-n_1 n_2 w_{13} w_{12}-n_1 n_2 w_{13}-\frac{n_4
   w_{13}}{n_3}-\frac{n_1 w_{13}}{n_2}-\frac{w_{13}}{n_3 n_4}-\\-\frac{n_4
   w_{13}}{n_2^2 n_3 w_{12}}-\frac{n_4 w_{13}}{n_3 w_{12}}-\frac{n_1
   w_{13}}{n_2 w_{12}}-\frac{w_{13}}{n_2^2 n_3 n_4 w_{12}}-\frac{w_{13}}{n_3
   n_4 w_{12}}-\frac{w_{13}}{n_1 n_2 n_3^2 w_{12}}-\frac{1}{n_1 n_2
   w_{12}}-\\-\frac{n_4 w_{13}}{n_2^2 n_3 w_{12}^2}-\frac{w_{13}}{n_2^2 n_3 n_4
   w_{12}^2}-\frac{w_{13}}{n_1 n_2 n_3^2 w_{12}^2}-\frac{w_{13}}{n_1 n_2^3
   n_3^2 w_{12}^2}-\frac{w_{13}}{n_1 n_2^3 n_3^2 w_{12}^3}
\end{multline}

\begin{equation}
    B = -n_2 n_3 w_{12}-\frac{n_2 n_3 w_{12}}{w_{13}}-\frac{1}{n_2 n_3 w_{12}}
\end{equation}
\begin{multline}
C=-\frac{n_3}{n_1 w_{13}}-\frac{n_4}{n_2 w_{12}}-\frac{1}{n_2 n_4
   w_{12}}-\frac{n_1 w_{13}}{n_3}-\frac{n_1 w_{13}}{n_3 w_{12}}-\frac{n_1
   w_{13}}{n_2^2 n_3 w_{12}}-\frac{n_1 w_{13}}{n_2^2 n_3
   w_{12}^2}-\\-\frac{1}{n_1 n_3 w_{12}}-\frac{1}{n_1 n_2^2 n_3
   w_{12}}-\frac{2}{n_1 n_2^2 n_3 w_{12}^2}-\frac{n_4 w_{13}}{n_2 n_3^2
   w_{12}}-\frac{w_{13}}{n_2 n_4 n_3^2 w_{12}}-\frac{n_4 w_{13}}{n_2 n_3^2
   w_{12}^2}-\\-\frac{n_4 w_{13}}{n_2^3 n_3^2 w_{12}^2}-\frac{w_{13}}{n_2 n_4
   n_3^2 w_{12}^2}-\frac{w_{13}}{n_2^3 n_4 n_3^2 w_{12}^2}-\frac{n_4
   w_{13}}{n_2^3 n_3^2 w_{12}^3}-\frac{w_{13}}{n_2^3 n_4 n_3^2
   w_{12}^3}-\\-\frac{w_{13}}{n_1 n_2^2 n_3^3 w_{12}^2}-\frac{w_{13}}{n_1 n_2^2
   n_3^3 w_{12}^3}-\frac{w_{13}}{n_1 n_2^4 n_3^3 w_{12}^3}-\frac{w_{13}}{n_1
   n_2^4 n_3^3 w_{12}^4}    
\end{multline}
\paragraph{Weak coupling.} $A$ has no leading dependence on the plumbing exponential \(w_{13}^{-1}\), 
while \(B\) has leading \(w_{13}^{-1}\)-dependence in the limit $q\to 0$.

\subsection{Type 4}

\begin{figure}[h]
\centering
\includegraphics[height=0.5\textwidth]{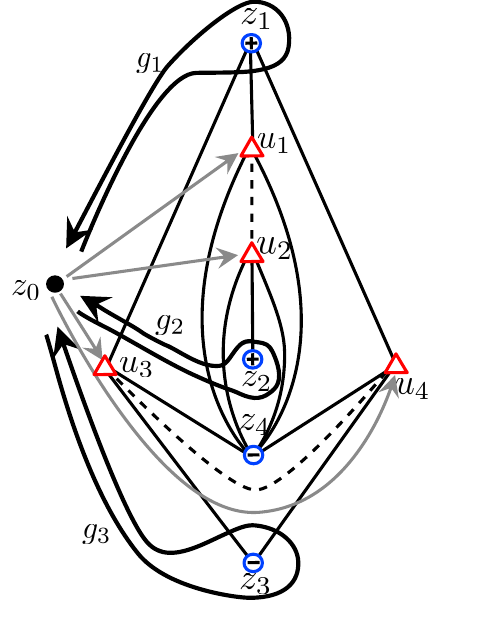}
\caption{Stokes graph of Type 4.}
\end{figure}

\begin{equation}
    g_1 = \begin{pmatrix}
        1 & -i w_3 \\
        0 & 1
    \end{pmatrix} \cdot \begin{pmatrix}
        1 & -i w_1 \\
        0 & 1
    \end{pmatrix} \cdot \begin{pmatrix}
        1 & -i w_4 n_1^2 n_3^2 \\
        0 & 1
    \end{pmatrix} \cdot \begin{pmatrix}
        n_1 & 0 \\
        0 & \frac{1}{n_1}
    \end{pmatrix}
\end{equation}
\begin{multline}
    g_2 = \begin{pmatrix}
        1 & -i w_3 \\
        0 & 1
    \end{pmatrix} \cdot \begin{pmatrix}
        1 & 0 \\
        i \frac{1}{w_1} & 1
    \end{pmatrix} \cdot \begin{pmatrix}
        1 & 0 \\
        i \frac{1}{w_2} & 1
    \end{pmatrix} \cdot \begin{pmatrix}
        1 & -i w_2 n_2^2 \\
        0 & 1
    \end{pmatrix} \cdot \\ \cdot \begin{pmatrix}
        1 & 0 \\
        -i \frac{1}{w_2n_2^2} & 1
    \end{pmatrix}\cdot \begin{pmatrix}
        1 & 0 \\
        -i \frac{1}{w_1n_2^2} & 1
    \end{pmatrix}\cdot \begin{pmatrix}
        1 & i w_3 n_2^2 \\
        0 & 1
    \end{pmatrix}\cdot \begin{pmatrix}
        n_2 & 0 \\
        0 & \frac{1}{n_2}
    \end{pmatrix}
\end{multline}

\begin{equation}
    g_3 = \begin{pmatrix}
        1 & 0 \\
        -i \frac{1}{w_4n_3^2} & 1
    \end{pmatrix} \cdot \begin{pmatrix}
        1 & 0 \\
        -i \frac{1}{w_3n_3^2} & 1
    \end{pmatrix} \cdot \begin{pmatrix}
        n_3 & 0 \\
        0 & \frac{1}{n_3}
    \end{pmatrix}
\end{equation}

\begin{multline}
    A=-n_1 n_2 w_{12}-n_1 n_2 w_{13} w_{12}-n_1 n_2 w_{13}-\frac{n_1 w_{13}}{n_2}-\frac{n_3
   w_{13}}{n_4}-\frac{n_1 w_{13}}{n_2 w_{12}}- \\ -\frac{n_3 w_{13}}{n_2^2 n_4
   w_{12}}- \frac{n_3 w_{13}}{n_4 w_{12}}-\frac{1}{n_1 n_2 w_{12}}-\frac{n_3 w_{13}}{n_2^2
   n_4 w_{12}^2}
\end{multline}
\begin{multline}
 B = -2 n_1 n_4 n_2^2 w_{12}^2-n_1 n_4 n_2^2 w_{12}-n_1 n_4 n_2^2 w_{12}^2 w_{13}-n_1 n_4
   n_2^2 w_{12} w_{13}-\frac{n_1 n_4 n_2^2 w_{12}^2}{w_{13}}-\\ -n_3 n_2 w_{12}- \frac{n_2
   w_{12}}{n_3}-n_3 n_2 w_{13}-n_3 n_2 w_{12} w_{13}-\frac{n_2 w_{12}}{n_3 w_{13}}-n_1
   n_4 w_{12}- \\ -n_1 n_4 w_{13}-n_1 n_4 w_{12} w_{13}-\frac{n_3 w_{13}}{n_2}-\frac{n_3
   w_{13}}{n_2 w_{12}}    
\end{multline}
\begin{equation}
    C=-n_2 n_4 w_{12}-\frac{n_2 n_4 w_{12}}{w_{13}}-\frac{1}{n_1 n_3 w_{13}}-\frac{1}{n_2 n_4
   w_{12}}
\end{equation}
\paragraph{Weak coupling.} $A$ has no leading dependence on the plumbing exponential \(w_{13}\), 
while \(B\) has leading \(w_{13}\)-dependence in the limit $q\to 0$.

\subsection{Type 5}

\begin{figure}[h]
\centering
\includegraphics[height=0.5\textwidth]{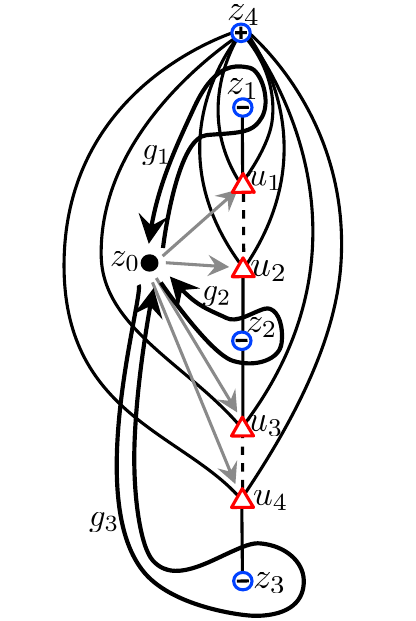}
\caption{Stokes graph of Type 5.}
\end{figure}

\begin{equation}
    g_1 = \begin{pmatrix}
        1 & -i w_2 \\
        0 & 1
    \end{pmatrix}\cdot\begin{pmatrix}
        1 & -i w_1 \\
        0 & 1
    \end{pmatrix}\cdot\begin{pmatrix}
        1 & 0 \\
        -i \frac{1}{w_1} & 1
    \end{pmatrix}\cdot\begin{pmatrix}
        1 & i w_1 n_1^2 \\
        0 & 1
    \end{pmatrix}\cdot\begin{pmatrix}
        1 & i w_2 n_1^2 \\
        0 & 1
    \end{pmatrix}\cdot\begin{pmatrix}
        n_1 & 0 \\
        0 & \frac{1}{n_1}
    \end{pmatrix}
\end{equation}
\begin{equation}
    g_2 = \begin{pmatrix}
        1 & 0\\
        -i \frac{1}{w_3} & 1
    \end{pmatrix}\cdot \begin{pmatrix}
        1 & 0 \\
        -i \frac{1}{w_2 n_2^2} & 1
    \end{pmatrix}\cdot \begin{pmatrix}
        n_2 & 0 \\
        0 & \frac{1}{n_2}
    \end{pmatrix}
\end{equation}
\begin{equation}
    g_3 = \begin{pmatrix}
        1 & iw_3 \\
        0 & 1
    \end{pmatrix} \cdot\begin{pmatrix}
        1 & iw_4 \\
        0 & 1
    \end{pmatrix} \cdot\begin{pmatrix}
        1 & 0 \\
        -i\frac{1}{w_4n_3^2} & 1
    \end{pmatrix} \cdot\begin{pmatrix}
        1 & -iw_4 n_3^2 \\
        0 & 1
    \end{pmatrix} \cdot\begin{pmatrix}
        1 & -iw_3n_3^2 \\
        0 & 1
    \end{pmatrix} \cdot\begin{pmatrix}
        n_3 & 0 \\
        0 & \frac{1}{n_3}
    \end{pmatrix}
\end{equation}

\begin{equation}
    A = -\frac{n_1 n_2 w_{12}^2}{w_{13}}-n_1 n_2 w_{12}-\frac{n_1 n_2 w_{12}}{w_{13}}-\frac{n_2
   w_{12}}{n_1 w_{13}}-\frac{n_2}{n_1 w_{13}}-\frac{1}{n_1 n_2 w_{12}}
\end{equation}
\begin{equation}
    B = -n_1 n_4 w_{12}-n_1 n_4 w_{13}-\frac{n_3 w_{13}}{n_2 w_{12}}-\frac{w_{13}}{n_2 n_3
   w_{12}}-\frac{1}{n_1 n_4 w_{12}}-\frac{w_{13}}{n_1 n_2^2 n_4 w_{12}^2}
\end{equation}
\begin{multline}
C=-\frac{n_1^2 n_2 n_4 w_{12}^3}{w_{13}}-2 n_1^2 n_2 n_4 w_{12}^2-\frac{n_1^2 n_2 n_4
   w_{12}^2}{w_{13}}-\frac{n_2 n_4 w_{12}^2}{w_{13}}-n_1 n_3 w_{12}-n_1^2 n_2 n_4
   w_{12}-\\-n_2 n_4 w_{12}-n_1^2 n_2 n_4 w_{13} w_{12}-\frac{n_1 w_{12}}{n_3}-\frac{n_2 n_4
   w_{12}}{w_{13}}-n_1 n_3 w_{13}-\frac{n_1 w_{13}}{n_3}-\frac{w_{13}}{n_2 n_4 w_{12}}    
\end{multline}
\paragraph{Weak coupling.} $A$ has no leading dependence on the plumbing exponential \(w_{13}\), 
while \(B\) has leading \(w_{13}\)-dependence in the limit $q\to 0$.

\subsection{Type 6}

\begin{figure}[h]
\centering
\includegraphics[height=0.5\textwidth]{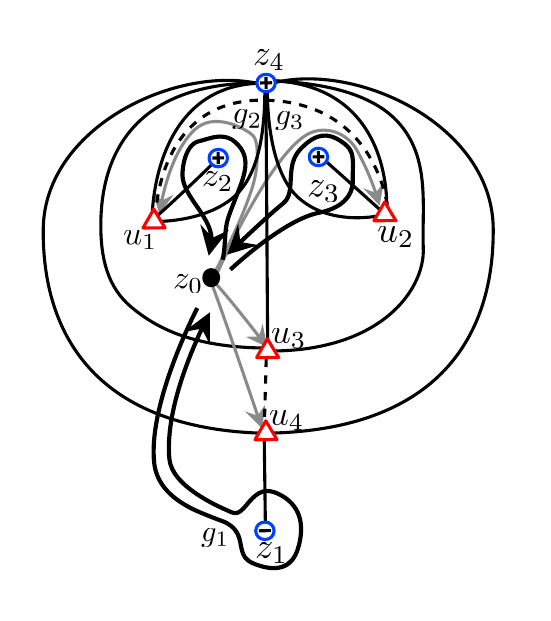}
\caption{Stokes graph of Type 6.}
\end{figure}

\begin{equation}
    g_1 = \begin{pmatrix}
        1 & iw_3 \\
        0 & 1
    \end{pmatrix} \cdot \begin{pmatrix}
        1 & iw_4 \\
        0 & 1
    \end{pmatrix} \cdot\begin{pmatrix}
        1 & 0 \\
        -i \frac{1}{w_4n_1^2} & 1
    \end{pmatrix} \cdot\begin{pmatrix}
        1 & -iw_4 n_1^2 \\
        0 & 1
    \end{pmatrix} \cdot\begin{pmatrix}
        1 & -iw_3 n_1^2 \\
        0 & 1
    \end{pmatrix} \cdot \begin{pmatrix}
        n_1 & 0 \\
        0 & \frac{1}{n_1}
    \end{pmatrix}
\end{equation}
\begin{equation}
    g_2 = \begin{pmatrix}
        1 & 0 \\
        i \frac{n_2^2}{w_1} & 1
    \end{pmatrix} \cdot \begin{pmatrix}
        1 & -iw_1 \\
        0 & 1
    \end{pmatrix} \cdot \begin{pmatrix}
        1 & 0 \\
        -i \frac{1}{w_1} & 1
    \end{pmatrix} \cdot \begin{pmatrix}
        n_2 & 0 \\
        0 & \frac{1}{n_2}
    \end{pmatrix}
\end{equation}
\begin{equation}
    g_3 = \begin{pmatrix}
        1 & 0 \\
        -i\frac{1}{w_3} & 1
    \end{pmatrix} \cdot \begin{pmatrix}
        1 & 0 \\
        -i\frac{1}{w_2 n_3^2} & 1
    \end{pmatrix} \cdot \begin{pmatrix}
        1 & -i w_2 n_3^2 \\
        0 & 1
    \end{pmatrix} \cdot \begin{pmatrix}
        1 & 0 \\
        \frac{i}{w_2 n_3^4} & 1
    \end{pmatrix} \cdot \begin{pmatrix}
        1 & 0 \\
        \frac{i}{w_3 n_3^2} & 1
    \end{pmatrix}\cdot \begin{pmatrix}
        n_3 & 0 \\
        0 & \frac{1}{n_3}
    \end{pmatrix}
\end{equation}

\begin{equation}
    A = -n_3 n_4 n_2^2 w_{12}-n_3 n_4 n_2^2 w_{12} w_{13}-n_1 n_2 w_{13}-\frac{n_2
   w_{13}}{n_1}-n_3 n_4 w_{12}-\frac{w_{13}}{n_3 n_4 w_{12}}-\frac{n_3 n_4
   w_{12}}{w_{13}}
\end{equation}
\begin{equation}
    B = -n_2 n_3 w_{12}-\frac{n_2 n_3 w_{12}}{w_{13}}-\frac{n_3 w_{12}}{n_2
   w_{13}}-\frac{n_3}{n_2 w_{13}}-\frac{n_3 w_{12}}{n_2 w_{13}^2}-\frac{1}{n_2 n_3
   w_{12}}-\frac{1}{n_2 n_3 w_{13}}
\end{equation}
\begin{equation}
    C = -n_2 n_4 w_{13}-\frac{n_1 w_{13}}{n_3 w_{12}}-\frac{w_{13}}{n_1 n_3
   w_{12}}-\frac{w_{13}}{n_2 n_3^2 n_4 w_{12}^2}-\frac{1}{n_2 n_4 w_{12}}-\frac{1}{n_2
   n_3^2 n_4 w_{12}}-\frac{1}{n_2 n_4 w_{13}}
\end{equation}

\paragraph{Weak coupling.} Note that there is no limit $q \to 0$ in which $z_1, z_2$ and $u_1, u_2$ end up in the inner region. This explains why $A$ has terms both proportional to $w_{13}$ and its inverse. A valid weak coupling limit in this case corresponds to $z_2 \to z_3$, with $u_1$ and $u_2$ ending up in the same region. That explains why $B$ is a Taylor series in $w_{13}^{-1}$. $A$, on the other hand, has terms proportional to $w_{13}$, which dominate in the $z_2 \to z_3$ limit.

\bibliographystyle{unsrtnat}
\bibliography{references.bib}

\end{document}